\documentclass{report}

\usepackage{amsmath}
\usepackage{amsthm}
\usepackage{amssymb}
\usepackage{stmaryrd}
\usepackage{mathpartir}
\usepackage{hyperref}
\usepackage[backend=bibtex]{biblatex}

\usepackage{tikz}
\usetikzlibrary{matrix}

\title{Polymorphic Bottom-Up Weighted Relational Programming}
\author{Dmitri Volkov}

\newtheorem{theorem}{Theorem}
\newtheorem{lemma}{Lemma}

\theoremstyle{definition}
\newtheorem{definition}{Definition}

\newcommand{\den}[1]{\llbracket #1 \rrbracket}

\newcommand{\wt}{\mathrel{\mathrm{wt}}}
\newcommand{\vt}{\mathrel{\mathrm{vt}}}

\DeclareMathOperator{\shell}{shell}
\DeclareMathOperator{\holes}{holes}
\DeclareMathOperator{\envshell}{envshell}
\DeclareMathOperator{\envholes}{envholes}

\newcommand{\eqpat}{\leftrightharpoons}
\newcommand{\eqpatD}{\eqpat_{\Delta}}

\DeclareMathOperator{\enforceeqpatD}{enforce_{\eqpatD}}
\DeclareMathOperator{\compile}{compile}

\begin{document}

\maketitle

\thispagestyle{empty}
\begin{center}
    Accepted by Luddy Graduate Faculty, Indiana University, in partial
    fulfillment of the requirements for the degree of Master of
    Science.
\end{center}
\vspace{5\baselineskip}
\setlength{\unitlength}{1pc}
\rightline{Master's Committee\hfill\line(1,0){14}}
\rightline{Chung-chieh Shan, Ph.D.}
\vspace{5\baselineskip}
\rightline{\line(1,0){14}}
\rightline{Lawrence S. Moss, Ph.D.}
\vspace{5\baselineskip}
\leftline{April 15, 2026}

\begin{abstract}
	This work presents a new approach for implementing polymorphism for bottom-up relational languages, without monomorphization.
	We begin by introducing semiringKanren, a bottom-up weighted relational programming language.
	We extend this base language to support polymorphism.
	We describe a new method to compile polymorphic semiringKanren programs into non-polymorphic ones, based on \emph{equality patterns} and \emph{large-enough instances} of polymorphic relations.
	We prove the correctness of this method.
	Finally, we consider existing work and suggest directions for future research.
\end{abstract}

\tableofcontents

\chapter{Introduction}

Relational programming (also known as logic programming) is arguably the most expressive and powerful paradigm for constructing computations.
Where most languages express programs as sequences of instructions or as mathematical functions, relational languages express programs with the full power of mathematical relations.
Applications of relational programming include natural language processing\cite{mooney1997} and program generation\cite{byrd2012}, both of which can be difficult to express in other paradigms.

Relational programming languages are usually implemented either via \emph{top-down} search, or \emph{bottom-up} fact collection.
Top-down relational programming languages include Prolog and miniKanren; bottom-up relational languages include datalog and SQL.
The top-down approach can effectively handle recursive data, but can also be prone to getting stuck searching an ineffectual branch.
The bottom-up approach may not be able to handle recursive data, and may use space less efficiently, but has a clearer semantics (especially for weighted relations) and can avoid many of the traps of top-down languages.

We use the term \emph{polymorphism} to describe code that can operate on different types of data.
When programmers want to processes different types of data in similar ways, polymorphism allows them to do so without manually re-writing similar code for each particular type.
Polymorphism is generally classified as either \emph{parametric} or \emph{ad-hoc}.
In parametric polymorphism, unknown types are expressed as \emph{type variables}, and direct operations on values of unknown types are prohibited.
Ad-hoc polymorphism does allow operations on values of unknown types, and these operations behave differently depending on the underlying actual types of the values.
In this work, we focus on parametric polymorphism.

The notion of a bottom-up polymorphic relational programming language can feel paradoxical.
How can we collect facts like ``\(x\) is related to \(y\)'' when we do not know the domains \(x\) and \(y\) are drawn from?
One approach might be to track which types for \(x\) and \(y\) are used in practice, and create specialized versions of polymorphic relations for these types.
This approach is known as \emph{monomorphization}.
Unfortunately, this can substantially increase the size of programs, and does not generalize.
What if we want to call a relation on a new type that does not yet have specialized relations?

In this work, we present semiringKanren, a bottom-up weighted relational programming language supporting polymorphism \emph{without} monomorphization.
We begin by introducing the base semiringKanren language (which does not support polymorphism) to build familiarity and establish the foundational concepts.
We then show how the syntax and type system of this base language can be extended to support polymorphism.
We briefly describe a monomorphizing semantics for this language, before introducing a non-monomorphizing compilation technique from the polymorphic language to the base language.
In particular, we show how \emph{equality patterns} can be used to extract the underlying structure of \emph{large-enough instances} of polymorphic relations
to handle polymorphic relation calls at different types.
We conclude the section with correctness proofs of this approach.
We compare semiringKanren with existing work, focusing particularly on other polymorphic bottom-up weighted relational languages.
Finally, we consider areas for future exploration and development.

We believe this new approach to polymorphism can offer improvements in the broader space of relational programming.
In particular, we hope it may offer efficiency gains, and enable ``separate evaluation'' of polymorphic programs.

\chapter{semiringKanren}

We introduce the semiringKanren language and the underlying principles.
The semiringKanren language consists of relations defined in terms of goals, which may take in values inhabiting algebraic data types.
Relations take arguments, and semiringKanren computes weights for every possible concrete value assignment to arguments.

The semiringKanren language as presented here differs from prior work~\cite{volkov2025}.
Rather than having primitive ``constructor'' relations, this version has an extended value language allowed within unification and disunification.
We believe this makes programs easier to write, and simplifies the implementation of algorithms occurring later in this work.

\section{Example Programs}

We introduce semiringKanren by example.

\subsection{Relations}

In semiringKanren, programs are made up of relations, which in turn are made up of goals.
Relations take zero or more typed argument variables, and either \emph{succeed} or \emph{fail} for each possible assignment of concrete values to those variables.
We use \(\texttt{defrel}\) to define new relations.
\begin{equation}
\begin{aligned}
	& \texttt{(defrel (coin-flip (coin : (Sum Unit Unit)))} \\
	& \quad \texttt{(disj} \\
	& \quad\quad \texttt{(== coin (left sole))} \\
	& \quad\quad \texttt{(== coin (right sole))))} \\
\end{aligned}
\end{equation}
Here, we define a relation \(\texttt{coin-flip}\) that takes a single argument of type \(\texttt{(Sum Unit Unit)}\).
The type \(\texttt{(Sum Unit Unit)}\) has two values: \(\texttt{(left sole)}\) and \(\texttt{(right sole)}\).
We use \(\texttt{(left sole)}\) to represent tails, and \(\texttt{(right sole)}\) to represent heads.
This definition showcases two types of goals: the connective \(\texttt{disj}\) and the primitive relation \(\texttt{==}\).
The \(\texttt{disj}\) goal succeeds when either of its subgoals succeeds.
The primitive relation \(\texttt{==}\) succeeds when its two arguments are equal.
Thus, this relation succeeds either when it is called with the argument \(\texttt{(left sole)}\), or with the argument \(\texttt{(right sole)}\).

Note that many relational programming languages have a distinction between relations and \emph{queries} on those relations.
This distinction is not necessary in semiringKanren because relation evaluation already directly computes the satisfying argument values.

\subsection{Weighted Relations}

Beyond success and failure, semiringKanren relations can have different \emph{weights}.
We can add weight to a program branch with the \(\texttt{factor}\) goal.
We can modify the previous example to express a weighted unfair coin flip:
\begin{equation}
\begin{aligned}
	& \texttt{(defrel (unfair-coin-flip (coin : (Sum Unit Unit)))} \\
	& \quad \texttt{(disj} \\
	& \quad \quad \texttt{(conj (factor 0.7) (== coin (left sole)))} \\
	& \quad \quad \texttt{(conj (factor 0.3) (== coin (right sole)))))} \\
\end{aligned}
\end{equation}
We can think of \(\texttt{conj}\) as succeeding when both of its subgoals succeed.
Nonzero weights represent success, and a weight of \(0\) represents failure.
When working with weights, \(\texttt{conj}\) generalizes to multiplication and \(\texttt{disj}\) generalizes to addition.
Hence, this relation assigns the weight \(0.7\) to \(\texttt{(left sole)}\) (representing heads), and the weight \(0.3\) to \(\texttt{(right sole)}\) (representing tails).

\subsection{Recursion}

Within a relation, we can call other relations or even recursively call the current relation.
For example, we can simulate a fair coin flip with unfair coins~\cite{vonneumann1963}.
For this process, we flip an unfair coin twice.
If the two coin flips have the same result, we repeat the process.
Otherwise, we go with the result of the first coin flip.
\begin{equation}
\begin{aligned}
	& \texttt{(defrel (fair-coin-clip (coin : (Sum Unit Unit)))} \\
	& \quad \texttt{(fresh ((coin1 : (Sum Unit Unit))} \\
	& \quad\quad\quad\quad\quad\; \texttt{(coin2 : (Sum Unit Unit)))} \\
	& \quad\quad \texttt{(conj} \\
	& \quad\quad\quad \texttt{(unfair-coin-flip coin1)} \\
	& \quad\quad\quad \texttt{(unfair-coin-flip coin2)} \\
	& \quad\quad\quad \texttt{(disj} \\
	& \quad\quad\quad\quad \texttt{(conj} \\
	& \quad\quad\quad\quad\quad \texttt{(== coin1 coin2)} \\
	& \quad\quad\quad\quad\quad \texttt{(fair-coin-flip coin)))} \\
	& \quad\quad\quad\quad \texttt{(conj} \\
	& \quad\quad\quad\quad\quad \texttt{(=/= coin1 coin2)} \\
	& \quad\quad\quad\quad\quad \texttt{(== coin coin1)))))} \\
\end{aligned}
\end{equation}
Here, we use the \(\texttt{fresh}\) goal, which introduces \emph{fresh variables}, and sums over the weights of its subgoal for each concrete value the variables can take.
This can be thought of as a disjunction over all concrete values inhabited by the fresh variables.
We also use the disequality primitive relation \(\texttt{=/=}\), which tests that its arguments are not equal.

The semiringKanren languages handles recursion by finding the \emph{least fixed point} of relations.
This means it repeatedly evaluates each relation, using the weights from the previous evaluation whenever there is a call.
Throughout this process, relations gradually accumulate more information.
We refer to this as \emph{bottom-up evaluation}.

\subsection{Semirings}

So far we have been working with the real numbers \(\mathbb{R}\), where \(\texttt{disj}\) denotes real number addition, and \(\texttt{conj}\) denotes real number multiplication.
However, semiringKanren can evaluate programs over any \emph{commutative semiring}.
A commutative semiring is a set equipped with \(+\) and \(\times\) operations,
both of which are commutative (\(a + b = b + a\)), associative (\(a + (b + c) = (a + b) + c\)), and obey distributivity \(a \times (b + c) = (a \times b) + (a \times c)\).
We also require this set to have a multiplicative identity~\(1\), and an additive identity/multiplicative annihilator~\(0\).
Sometimes the term \emph{rig}, as in ``ri\underline{n}g without \underline{n}egation,'' is used.
Besides the real numbers equipped with addition and multiplication \((\mathbb{R}, +, *,0,1)\),
other semirings include the \emph{boolean semiring}, equipped with logical ``or'' and logical ``and'' \((\mathbb{B},\vee,\wedge,\bot,\top)\),
and the \emph{min-tropical semiring}: the real numbers and infinity equipped with minimum and addition \((\mathbb{R}^{\infty},\min,+,\infty,0)\).

To show the utility of different semirings, we present the example of transitive closure of a relation.
Transitive closure can be thought of as ``reachability by chaining.''
Given a relation \(a \mathrel{R} b\), for each element \(x\), we want to find all \(y\) such that \(x \mathrel{R} \dots \mathrel{R} y\).
We can represent the relation we want to process as a graph:
\begin{center}
    \begin{tikzpicture}[>=stealth]
        \node (m) [matrix of nodes, column sep=3em] { 0 & 1 & 2 & 3 \\ };
        \draw [->, bend left] (m-1-1) to (m-1-2);
        \draw [->, bend left] (m-1-2) to (m-1-1);
        \draw [->, bend left] (m-1-2) to (m-1-3);
        \draw [<-, bend left] (m-1-3) to (m-1-4);
    \end{tikzpicture}
\end{center}
Here, our relation operates on numbers \(0\) through \(3\).
For convenience, we call this type \(4\), and represent the members of this type with \(\mathtt{0_4},\mathtt{1_4},\mathtt{2_4},\mathtt{3_4}\).
In practice, we create this type using sum of four \(\texttt{Unit}\)s.
Assuming we give each edge in the graph some weight \(w\), we can write this relation in semiringKanren as follows:
\begin{equation}
\begin{aligned}
	& \texttt{(defrel (graph ((x : 4) (y : 4)))} \\
	& \quad \texttt{(disj} \\
	& \quad\quad \texttt{(conj (== x \(\mathtt{0_4}\)) (== y \(\mathtt{1_4}\)) (factor \(w\)))} \\
	& \quad\quad \texttt{(conj (== x \(\mathtt{1_4}\)) (== y \(\mathtt{0_4}\)) (factor \(w\)))} \\
	& \quad\quad \texttt{(conj (== x \(\mathtt{1_4}\)) (== y \(\mathtt{2_4}\)) (factor \(w\)))} \\
	& \quad\quad \texttt{(conj (== x \(\mathtt{3_4}\)) (== y \(\mathtt{2_4}\)) (factor \(w\)))))} \\
\end{aligned}
\end{equation}
We can now express reachability/transitive closure as a recursive relation: 
\begin{equation}
\begin{aligned}
	& \texttt{(defrel (connect ((x : 4) (y : 4)))} \\
	& \quad \texttt{(disj} \\
	& \quad\quad \texttt{(graph x y)} \\
	& \quad\quad \texttt{(fresh ((z : 4))} \\
	& \quad\quad\quad \texttt{(conj} \\
	& \quad\quad\quad\quad \texttt{(connect x z)} \\
	& \quad\quad\quad\quad \texttt{(connect z y)))))} \\
\end{aligned}
\end{equation}
When we evaluate this over the boolean semiring with \(w = \top\), we get reachability:
\begin{center}
	\begin{tabular}{l|cccc}
		& \(y = \mathtt{0_4}\) & \(y = \mathtt{1_4}\) & \(y = \mathtt{2_4}\) & \(y = \mathtt{3_4}\) \\
		\hline
		\(x = \mathtt{0_4}\) &\(\top\) & \(\top\) & \(\top\) & \(\bot\) \\
		\(x = \mathtt{1_4}\) &\(\top\) & \(\top\) & \(\top\) & \(\bot\) \\
		\(x = \mathtt{2_4}\) &\(\bot\) & \(\bot\) & \(\bot\) & \(\bot\) \\
		\(x = \mathtt{3_4}\) &\(\bot\) & \(\bot\) & \(\top\) & \(\bot\)
	\end{tabular}
\end{center}
When we evaluate this over the min-tropical semiring with \(w = 1\), we get the shortest path lengths:
\begin{center}
	\begin{tabular}{l|cccc}
		& \(y = \mathtt{0_4}\) & \(y = \mathtt{1_4}\) & \(y = \mathtt{2_4}\) & \(y = \mathtt{3_4}\) \\
		\hline
		\(x = \mathtt{0_4}\) &\(2\) & \(1\) & \(2\) & \(\infty\) \\
		\(x = \mathtt{1_4}\) &\(1\) & \(2\) & \(1\) & \(\infty\) \\
		\(x = \mathtt{2_4}\) &\(\infty\) & \(\infty\) & \(\infty\) & \(\infty\) \\
		\(x = \mathtt{3_4}\) &\(\infty\) & \(\infty\) & \(1\) & \(\infty\)
	\end{tabular}
\end{center}

\section{Syntax and Typing}

\begin{figure}
\[
\begin{array}{lrcl}
	\textrm{Programs} & P & ::= & \mathit{Rel}\dots \\
	\textrm{Relations} & Rel & ::= & \texttt{(defrel (\(R\) (\(x:\tau\)) \(\dots\)) \(g\)) } \\
	\textrm{Goals} & g & ::= & \texttt{(conj \(g\) \(g\))} \\
	& & | & \texttt{(disj \(g\) \(g\))} \\
	& & | & \texttt{(fresh ((\(x:\tau\))) \(g\))} \\
	& & | & \texttt{(== \(v\) \(v\))} \\
	& & | & \texttt{(=/= \(v\) \(v\))} \\
	& & | & \texttt{(\(R\) \(v\) \(\dots\))} \\
	& & | & \texttt{(factor \(k\))} \\
	\textrm{Values} & v & ::= & \texttt{sole} \\
	& & | & \texttt{(left\(_{\texttt{(Sum \(\tau_1\) \(\tau_2\))}}\) \(v\))} \\
	& & | & \texttt{(right\(_{\texttt{(Sum \(\tau_1\) \(\tau_2\))}}\) \(v\))} \\
	& & | & \texttt{(pair \(v\) \(v\))} \\
	& & | & x \\
	\textrm{Types} & \tau & ::= & \texttt{Unit} \\
	& & | & \texttt{(Sum \(\tau_1\) \(\tau_2\))} \\
	& & | & \texttt{(Pair \(\tau_1\) \(\tau_2\))} \\
	\textrm{Relation names} & R & & \\
	\textrm{Variables} & x,y,z & & \\
	\textrm{Weights} & r & \in & \mathbb{K}
\end{array}
\]
	\caption{Syntax for semiringKanren.}
	\label{fig:semiringkanren-syntax}
\end{figure}

We give the syntax of semiringKanren, omitting some syntactic sugar, in figure \ref{fig:semiringkanren-syntax}.
Here, \(\texttt{conj}\) and \(\texttt{disj}\) must have two subgoals, and \(\texttt{fresh}\) can only introduce a single variable at a time.

\begin{figure}
\begin{mathpar}
	\infer{ }{
		\Delta \vdash \texttt{sole} : \texttt{Unit}
	}
	\and
	\infer{
		\Delta \vdash v : \tau_1
	}
	{
		\Delta \vdash \texttt{(left\(_{\texttt{(Sum \(\tau_1\) \(\tau_2\))}}\) \(v\))} : \texttt{(Sum \(\tau_1\) \(\tau_2\))}
	}
	\and
	\infer{
		\Delta \vdash v : \tau_2
	}
	{
		\Delta \vdash \texttt{(right\(_{\texttt{(Sum \(\tau_1\) \(\tau_2\))}}\) \(v\))} : \texttt{(Sum \(\tau_1\) \(\tau_2\))}
	}
	\and
	\infer{
		\Delta \vdash v_1 : \tau_1
		\and
		\Delta \vdash v_2 : \tau_2
	}
	{
		\Delta \vdash \texttt{(pair \(v_1\) \(v_2\))} : \texttt{(Prod \(\tau_1\) \(\tau_2\))}
	}
	\and
	\infer{
	}
	{
		\Delta, x : \tau \vdash x : \tau
	}
\end{mathpar}
\caption{Inference rules for value typing judgement.}
\label{fig:value-types}
\end{figure}

\begin{figure}
\begin{mathpar}
	\infer{
		\Gamma;\Delta \vdash g_1 \wt
		\and
		\Gamma;\Delta \vdash g_2 \wt
	}{
		\Gamma;\Delta \vdash \texttt{(conj \(g_1\) \(g_2\))} \wt
	}
	\and
	\infer{
		\Gamma;\Delta \vdash g_1 \wt
		\and
		\Gamma;\Delta \vdash g_2 \wt
	}{
		\Gamma;\Delta \vdash \texttt{(disj \(g_1\) \(g_2\))} \wt
	}
	\and
	\infer{
		\tau \vt
		\and
		\Gamma;\Delta,x:\tau \vdash g \wt
	}{
		\Gamma;\Delta \vdash \texttt{(fresh ((\(x:\tau\))) \(g\))} \wt
	}
	\and
	\infer{
		\Delta \vdash v_1 : \tau
		\and
		\Delta \vdash v_2 : \tau
	}{
		\Gamma; \Delta \vdash \texttt{(== \(v_1\) \(v_2\))} \wt
	}
	\and
	\infer{
		\Delta \vdash v_1 : \tau
		\and
		\Delta \vdash v_2 : \tau
	}{
		\Gamma; \Delta \vdash \texttt{(=/= \(v_1\) \(v_2\))} \wt
	}
	\and
	\infer{
		\Delta \vdash \vec v : \vec \tau
	}{
		\Gamma, (R : \vec\tau \rightarrow); \Delta \vdash \texttt{(\(R\) \(\vec v\))} \wt
	}
	\and
	\infer{ }{
		\Gamma; \Delta \vdash \texttt{(factor \(r\))} \wt
	}
\end{mathpar}
\caption{Inference rules for goal typing judgement.}
\label{fig:goal-types}
\end{figure}

\begin{figure}
\begin{mathpar}
	\infer{
		\Gamma ;\vec x:\vec \tau \vdash g \wt
	}{
		\Gamma \vdash \texttt{(defrel (\(R\) (\(\vec x : \vec \tau\))) \(g\)))} \wt
	}
	\and
	\infer{
		(R_1 : \vec \tau_1 \rightarrow), \dots, (R_N : \vec \tau_N \rightarrow) \vdash
		\texttt{(defrel (\(R_1\) (\(\vec x_1 : \vec \tau_1\))) \(g_1\))} \wt
		\\\\
		\vdots
		\\\\
		(R_1 : \vec \tau_1 \rightarrow), \dots, (R_N : \vec \tau_N \rightarrow) \vdash
		\texttt{(defrel (\(R_N\) (\(\vec x_N : \vec \tau_N\))) \(g_1\))} \wt
	}{
		\texttt{(defrel (\(R_1\) (\(\vec x_1:\vec \tau_1\))) \(g_1\)))}, \dots, \\\\
		\texttt{(defrel (\(R_N\) (\(\vec x_N:\vec \tau_N\))) \(g_N\)))} \wt
	}
\end{mathpar}
\caption{Inference rules for relation and program typing judgements.}
\label{fig:relation-types}
\end{figure}

Figure \ref{fig:value-types} shows the type system for semiringKanren values,
figure \ref{fig:goal-types} shows the type system for goals, and figure \ref{fig:relation-types} shows the type system for relations and programs.
For values, the typing judgement \(\Delta \vdash v : \tau\) means that the value \(v\) has type \(\tau\) under the value context \(\Delta\), which tracks the types of variables.
Note we have annotated the \(\texttt{left}\) and \(\texttt{right}\) sum type constructors with full sum types, to disambiguate when type checking.
In a practical implementation, this is not necessary.
Each entry in \(\Delta\) has the form \(x : \tau\), where \(x\) is a variable (from either a relation argument, or \(\texttt{fresh}\)), and \(\tau\) is a type.
For goals, we use the judgement \(\Gamma; \Delta \vdash g \wt\) to indicate that the goal \(g\) is well-typed under value context \(\Delta\) and relation context \(\Gamma\).
The \(\Gamma\) context tracks the types of relations.
Each entry in \(\Gamma\) has the form \((R : \tau_1, \dots, \tau_n \to)\) to indicate that the relation with name \(R\) takes \(n\) arguments with types \(\tau_1, \dots, \tau_n\).
For relations, we use the similar judgement \(\Gamma \vdash Rel \wt\), omitting the \(\Delta\) context because it is not needed (there are no variables to bind).
For programs, where there is no larger context, we simply use \(P \wt\).

Note that our relation typing judgement is ``the relation is well-typed under a context'' rather than ``the relation has type \(\dots\).''
Relations are required to specify all needed typing information, so it only is necessary to check that their bodies are compatible.
Because the types of all relations are known ahead-of-time, we avoid potential issues of recursive dependencies when type-checking relation calls.
We use the notation \((\vec x : \vec\tau)\) as shorthand for \((x_1 : \tau_1) \; \dots \; (x_n : \tau_n)\), or similar.

\section{Semantics}

\begin{figure}
\begin{align*}
	\den{\texttt{Unit}} &= \{\texttt{sole}\} \\
	\den{\texttt{(Sum \(\tau_1\) \(\tau_2\))}} &= \{\texttt{(left \(v\))} \mid v \in \den{\tau_1}\} \cup \{\texttt{(right \(v\))} \mid v \in \den{\tau_2}\} \\
	\den{\texttt{(Prod \(\tau_1\) \(\tau_2\))}} &= \{\texttt{(pair \(v_1\) \(v_2\))} \mid v_1 \in \den{\tau_1}, v_2 \in \den{\tau_2}\} \\
\end{align*}
\caption{Denotational semantics for types.}
\label{fig:types-semantics}
\end{figure}

Each type \(\tau\) denotes a finite set of values, as shown in figure \ref{fig:types-semantics}.
We can use values as indices into an array.
One way this can be done is by ordering the values lexicographically, then using their position in the list as an index drawn from \(\mathbb{N}\).

\begin{definition}
	We define the \emph{size of type \(\tau\)}, notated \(|\tau|\), to be the number of values of type \(\tau\).
	\begin{align*}
		|\tau| = | \den{\tau} |
	\end{align*}
\end{definition}

\begin{figure}
\begin{align*}
	\den{\texttt{sole}} (\delta) &= \texttt{sole} \\
	\den{\texttt{(left \(v\))}} (\delta) &= \texttt{(left \(\den{v} (\delta) \))} \\
	\den{\texttt{(right \(v\))}} (\delta) &= \texttt{(right \(\den{v} (\delta) \))} \\
	\den{\texttt{(pair \(v_1\) \(v_2\))}} (\delta) &= \texttt{(pair \(\den{v_1} (\delta) \) \(\den{v_2} (\delta)\))} \\
	\den{x} (\delta) &= \delta(x)
\end{align*}
\caption{Denotational semantics for values.}
\label{fig:values-semantics}
\end{figure}

The denotation for values, as shown in figure \ref{fig:values-semantics}, is simply the values themselves but with substitutions applied to variables.
We use the term \emph{concrete values} for values that do not include variables.
The denotation for values turns arbitrary values (which may contain variables) into concrete values.
To this end, we take a \emph{value environment} \(\delta\), consisting of entries of the form \(x \mapsto v\), where \(x\) is a variable name and \(v\) is a concrete value.
For \(\delta\) to be valid, each value entry \(x \mapsto v\) should have a matching type entry \(x : \tau\) in \(\Delta\), where \(\cdot \vdash v : \tau\).
In the denotational semantics for values, when we reach a variable, we apply \(\delta\) to get its concrete value.

\begin{figure}
\begin{align*}
	\den{\texttt{(conj \(g_1\) \(g_2\))}} (\gamma;\delta) &= \den{g_1} (\gamma;\delta) \times \den{g_2} (\gamma;\delta) \\
	\den{\texttt{(disj \(g_1\) \(g_2\))}} (\gamma;\delta) &= \den{g_1} (\gamma;\delta) + \den{g_2} (\gamma;\delta) \\
	\den{\texttt{(fresh ((\(x : \tau\))) \(g\))}} (\gamma;\delta) &= \sum_{v \in \den{\tau}} \den{g} (\gamma;\delta,x \mapsto v) \\
	\den{\texttt{(== \(v_1\) \(v_2\))}} (\gamma;\delta) &=
	\begin{cases}
		1 & \textrm{if } \den{v_1} (\delta) = \den{v_2} (\delta) \\
		0 & \textrm{if } \den{v_1} (\delta) \ne \den{v_2} (\delta)
	\end{cases} \\
	\den{\texttt{(=/= \(v_1\) \(v_2\))}} (\gamma;\delta) &=
	\begin{cases}
		1 & \textrm{if } \den{v_1} (\delta) \ne \den{v_2} (\delta) \\
		0 & \textrm{if } \den{v_1} (\delta) = \den{v_2} (\delta)
	\end{cases} \\
	\den{\texttt{(\(R\) \(\vec v\))}} (\gamma;\delta) &= \gamma(R)(\den{\vec v}(\delta))
	\textrm{ where } \vec v : \vec\tau \\
	\den{\texttt{(factor \(r\))}} (\gamma;\delta) &= r
\end{align*}
\caption{Denotational semantics for goals.}
\label{fig:goals-semantics}
\end{figure}

The most interesting semiringKanren semantics is that of goals.
For the denotation of goals, we take a \emph{relation environment} \(\gamma\) mapping relation names to precomputed relations (whose types are given in \(\Gamma\)), and a value environment \(\delta\).
The relational environment \(\gamma\) is valid if for any \((R : \tau_1,\dots,\tau_n \to) \in \Gamma\),
the relation in \(\gamma\) with relation name \(R\) maps indices drawn from \(\den{\tau_1} \times \dots \times \den{\tau_n}\) to weights drawn from \(\mathbb{K}\).

The denotation of goals as shown in figure \ref{fig:goals-semantics} computes a weight depending on the relation environment and current value environment.
Goals can be understood as operating on arrays indexed by \(\delta\) and parameterized by \(\gamma\).
For example, \(\texttt{conj}\) and \(\texttt{disj}\) denote element-wise multiplication and addition, and \(\texttt{(factor \(r\))}\) denotes the constant array with the weight \(r\) in each entry.
When \(x\) and \(y\) are different variables, \(\texttt{(== \(x\) \(y\))}\) produces an array with ones along the diagonal of the \(x\) and \(y\) dimensions,
and \(\texttt{(=/= \(x\) \(y\))}\) produces an array with ones on the off-diagonal and zeros along the diagonal.
The \(\texttt{fresh}\) goal sums along one dimension of a rank-\((n + 1)\) array (corresponding to the dimension of the new variable) to produce a rank-\(n\) array.

\begin{figure}
\begin{gather*}
	\den{\texttt{(defrel (\(R\) (\(\vec x : \vec \tau\)) g)}} (\gamma) =
	(\vec i) \mapsto \den{g} (\gamma; \vec x \mapsto \den{\vec i}) \\
	\den{\textit{Rel}_1, \dots, \textit{Rel}_Q} = \;
	\mathrel{\textrm{fix}} \gamma \mathrel{\textrm{in}} R_1 \mapsto \den{\textit{Rel}_1} (\gamma), \dots, R_Q \mapsto \den{\textit{Rel}_Q} (\gamma)
\end{gather*}
\caption{Denotational semantics for relations and programs.}
\label{fig:relation-semantics}
\end{figure}

We define the denotation of a relation to be a map from elements drawn from the types of its arguments, to weights.
Thus, the elements of \(\gamma\) are precomputed relation denotations.
We compute these denotations by evaluating the goal structure of each relation for all possible values inhabiting its argument types.
We now define the denotation of a complete program to be the least fixed point of the denotation of all relations.
We use \(\gamma\) to repeatedly compute the denotation of each relation to get a new \(\gamma\), and repeat this process until \(\gamma\) stabilizes.
In practice, \(\gamma\) starts out consisting of relations that always fail.
Then, the fixpointing process can be thought of as iteratively deriving new information.
These semantics are shown in figure \ref{fig:relation-semantics}.

\chapter{Polymorphic semiringKanren}

\section{Polymorphic Relations}

We extend semiringKanren with \emph{polymorphic relations} that operate on values of unknown types (represented with \emph{type variables}).
We introduce polymorphic relations by example.

\subsection{Equal}

Consider the following relation:
\begin{equation}
\begin{aligned}
	& \texttt{(defrel (equal (\(x : \alpha\)) (\(y : \alpha\)))} \\
	& \quad \texttt{(== \(x\) \(y\)))}
\end{aligned}
\end{equation}
This relation takes in two values of the same variable type, and checks that they are equal.
We can call this relation at different types.
For example, following call succeeds:
\begin{equation}
	\texttt{(equal sole sole)} 
\end{equation}
while the following call fails:
\begin{equation}
\begin{aligned}
	& \texttt{(equal}\\
	& \quad \texttt{(pair (left sole) (right sole))} \\
	& \quad \texttt{(pair (right sole) (left sole)))} \\
\end{aligned}
\end{equation}
In the first example, \(\alpha\) gets mapped to the type \(\texttt{Unit}\).
In the second example, \(\alpha\) gets mapped to the type \(\texttt{(Prod (Sum Unit Unit) (Sum Unit Unit))}\).
Note that the primitive relations \(\texttt{==}\) and \(\texttt{=/=}\) are polymorphic, even in the base language:
they can operate on any type, including variable types, as long as both arguments have the same type.

\subsection{Sum-Swap}

Continuing to a larger example:
\begin{equation}
\begin{aligned}
	& \texttt{(defrel (sum-swap (\(x : \texttt{(Sum \(\alpha\) \(\beta\))}\)) (\(y : \texttt{(Sum \(\beta\) \(\alpha\))}\)))} \\
	& \quad \texttt{(disj} \\
	& \quad \quad \texttt{(fresh ((\(a : \alpha\)))} \\
	& \quad \quad \quad \texttt{(conj} \\
	& \quad \quad \quad \quad \texttt{(== \(x\) (left \(a\)))} \\
	& \quad \quad \quad \quad \texttt{(== \(y\) (right \(a\)))))} \\
	& \quad \quad \texttt{(fresh ((\(b : \beta\)))} \\
	& \quad \quad \quad \texttt{(conj} \\
	& \quad \quad \quad \quad \texttt{(== \(x\) (right \(b\)))} \\
	& \quad \quad \quad \quad \texttt{(== \(y\) (left \(b\)))))))} \\
\end{aligned}
\end{equation}
This relation replaces top-level \(\texttt{left}\)s in a sum type with \(\texttt{right}\)s, and vice versa.
This relation uses multiple type variables, and includes type variables within larger type structures.
Assuming we call \(\texttt{sum-swap}\) with \(\alpha = \texttt{Unit}\) and \(\beta = \texttt{(Sum Unit Unit)}\),
\begin{equation}
	\texttt{(sum-swap (left sole) \(q\))} 
\end{equation}
succeeds when \(q = \texttt{(right sole)}\), and
\begin{equation}
	\texttt{(sum-swap (right (right sole)) \(q\))} 
\end{equation}
succeeds when \(q = \texttt{(left (right sole))}\).
When we have multiple type variables, they can be mapped to different types.

\subsection{Option-Map}

As a ``real-world'' motivating example, consider the following:
\begin{equation}
\begin{aligned}
	& \texttt{(defrel (option-map} \\
	& \quad \texttt{(\(f : \texttt{(Prod \(\alpha\) \(\beta\))}\)) (\(x : \texttt{(Sum Unit \(\alpha\))}\)) (\(y : \texttt{(Sum Unit \(\beta\))}\)))} \\
	& \quad \texttt{(disj} \\
	& \quad \quad \texttt{(conj} \\
	& \quad \quad \quad \texttt{(== \(x\) (left sole))} \\
	& \quad \quad \quad \texttt{(== \(y\) (left sole)))} \\
	& \quad \quad \texttt{(fresh ((\(a : \alpha\)) (\(b : \beta\)))} \\
	& \quad \quad \quad \texttt{(conj} \\
	& \quad \quad \quad \quad \texttt{(== \(x\) (right \(a\)))} \\
	& \quad \quad \quad \quad \texttt{(== \(f\) (pair \(a\) \(b\)))} \\
	& \quad \quad \quad \quad \texttt{(== \(y\) (right \(b\)))))))} \\
\end{aligned}
\end{equation}
The type \(\texttt{(Sum Unit \(\alpha\))}\) can be thought of as an ``option type'' where \(\texttt{(right \(v\))}\) can be thought of carrying a value,
and \(\texttt{(left sole)}\) can be though of not carrying a value, or ``null.''
Functional analogs of \(\texttt{option-map}\) are common in typed functional programming languages when working with optional values.
The relation \(\texttt{option-map}\) takes a relation call encoded as a pair, and applies it to the ``value'' part of \(x\) and \(y\) if possible.
For example, \(\texttt{option-map}\) may be called as follows:
\begin{equation}
\begin{aligned}
	& \texttt{(fresh ((\(h : \texttt{(Sum Unit Unit)}\)) (\(k : \texttt{Sum Unit Unit}\)))} \\
	& \quad \texttt{(conj} \\
	& \quad \quad \texttt{(sum-swap \(h\) \(k\))} \\
	& \quad \quad \texttt{(option-map (pair \(h\) \(k\)) (right (left sole)) \(q\))))}
\end{aligned}
\end{equation}
This succeeds for \(q = \texttt{(right (right sole))}\).
Here, both \(\alpha\) and \(\beta\) are mapped to \(\texttt{(Sum Unit Unit)}\) within the \(\texttt{option-map}\) call.

\subsection{Two-Valued}

We present one more example:
\begin{equation}
\begin{aligned}
	& \texttt{(defrel (two-valued (\(x : \alpha\)))} \\
	& \quad \texttt{(fresh ((\(y : \alpha\)))} \\
	& \quad \quad \texttt{(=/= \(x\) \(y\))))}
\end{aligned}
\end{equation}
This relation succeeds when it can find a value of the argument's type disequal from the argument.
For example, assuming \(\alpha\) gets mapped to \(\texttt{Sum Unit Unit}\) the following succeeds:
\begin{equation}
	\texttt{(two-valued (left sole))}
\end{equation}
because it can find \(y\) with the value \(\texttt{(right sole)}\).
However, consider calling \(\texttt{two-valued}\) with \(\alpha\) mapped to \(\texttt{Unit}\):
\begin{equation}
	\texttt{(two-valued sole)}
\end{equation}
This call fails, because the only value of type \(\texttt{Unit}\) is \(\texttt{sole}\), and \(\texttt{(=/= sole sole)}\) fails.
Success/failure depends on the type of the argument of \(\texttt{two-valued}\) rather than the value.

\section{Roadmap}

The obvious way to implement polymorphism is to generate multiple instances of each relation for each type variable mapping they are called with.
This is known as \emph{monomorphization}, and is historically the only way polymorphism has been implemented for bottom-up relational languages.
With semiringKanren, we present a new method of implementing polymorphism based on \emph{large-enough instances} of polymorphic relations,
where polymorphic relation calls with arbitrary type mappings are compiled to only use a single type mapping, with extra code generated at call sites to reconstruct the desired types.

We first present the syntax and typing rules for semiringKanren extended with polymorphic relations.
Next, we provide a semantics for polymorphic semiringKanren based on monomorphization.
Finally, we show how polymorphic relation calls can be compiled to use large-enough instances, and prove the correctness of this approach.

\section{Syntax and Typing}

We expand the syntax and typing of minimal semiringKanren to support polymorphic relations.
We give the syntax of polymorphic semiringKanren in figure \ref{fig:poly-syntax}.

\begin{figure}
\[
\begin{array}{lrcl}
	\textrm{Programs} & P & ::= & \mathit{Rel}\dots \\
	\textrm{Relations} & Rel & ::= & \texttt{(defrel (\(R\) \(\forall \alpha \dots\) . (\(x:\tau\)) \(\dots\)) \(g\)) } \\
	\textrm{Goals} & g & ::= & \texttt{(conj \(g\) \(g\))} \\
	& & | & \texttt{(disj \(g\) \(g\))} \\
	& & | & \texttt{(fresh ((\(x:\tau\)) \(\dots\)) \(g\))} \\
	& & | & \texttt{(== \(v\) \(v\))} \\
	& & | & \texttt{(=/= \(v\) \(v\))} \\
	& & | & \texttt{(\(R\) \(v\) \(\dots\))} \\
	& & | & \texttt{(factor \(k\))} \\
	\textrm{Values} & v & ::= & \texttt{sole} \\
	& & | & \texttt{(left\(_{\texttt{(Sum \(\tau_1\) \(\tau_2\))}}\) \(v\))} \\
	& & | & \texttt{(right\(_{\texttt{(Sum \(\tau_1\) \(\tau_2\))}}\) \(v\))} \\
	& & | & \texttt{(pair \(v\) \(v\))} \\
	& & | & x \\
	\textrm{Types} & \tau & ::= & \texttt{Unit} \\
	& & | & \texttt{(Sum \(\tau_1\) \(\tau_2\))} \\
	& & | & \texttt{(Pair \(\tau_1\) \(\tau_2\))} \\
	& & | & \alpha \\
	\textrm{Relation names} & R & & \\
	\textrm{Variables} & x,y,z & & \\
	\textrm{Type variables} & \alpha,\beta & & \\
	\textrm{Weights} & r & \in & \mathbb{K}
\end{array}
\]
\caption{Syntax for polymorphic semiringKanren}
\label{fig:poly-syntax}
\end{figure}

\begin{figure}
\begin{mathpar}
	\infer{ }{\Delta \vdash \texttt{Unit} \vt}
	\and
	\infer{
		\Delta \vdash \tau_1 \vt \and \Delta \vdash \tau_2 \vt
	}{
		\Delta \vdash \texttt{(Sum \(\tau_1\) \(\tau_2\))} \vt
	}
	\and
	\infer{
		\Delta \vdash \tau_1 \vt \and \Delta \vdash \tau_2 \vt
	}{
		\Delta \vdash \texttt{(Prod \(\tau_1\) \(\tau_2\))} \vt
	}
	\and
	\infer{ }{\Delta,\alpha : * \vdash \alpha \vt}
\end{mathpar}
\caption{Inference rules for polymorphic type validity judgement.}
	\label{fig:poly-types-valid}
\end{figure}

\begin{figure}
\begin{mathpar}
	\infer{ }{
		\Delta \vdash \texttt{sole} : \texttt{Unit}
	}
	\and
	\infer{
		\Delta \vdash v : \tau_1
	}
	{
		\Delta \vdash \texttt{(left\(_{\texttt{(Sum \(\tau_1\) \(\tau_2\))}}\) \(v\))} : \texttt{(Sum \(\tau_1\) \(\tau_2\))}
	}
	\and
	\infer{
		\Delta \vdash v : \tau_2
	}
	{
		\Delta \vdash \texttt{(right\(_{\texttt{(Sum \(\tau_1\) \(\tau_2\))}}\) \(v\))} : \texttt{(Sum \(\tau_1\) \(\tau_2\))}
	}
	\and
	\infer{
		\Delta \vdash v_1 : \tau_1
		\and
		\Delta \vdash v_2 : \tau_2
	}
	{
		\Delta \vdash \texttt{(pair \(v_1\) \(v_2\))} : \texttt{(Prod \(\tau_1\) \(\tau_2\))}
	}
	\and
	\infer{
		\Delta, x : \tau \vdash \tau \vt
	}
	{
		\Delta, x : \tau \vdash x : \tau
	}
\end{mathpar}
\caption{Inference rules for polymorphic value typing judgement.}
	\label{fig:poly-value-types}
\end{figure}

\begin{figure}
\begin{mathpar}
	\infer{
		\Gamma;\Delta \vdash g_1 \wt
		\and
		\Gamma;\Delta \vdash g_2 \wt
	}{
		\Gamma;\Delta \vdash \texttt{(conj \(g_1\) \(g_2\))} \wt
	}
	\and
	\infer{
		\Gamma;\Delta \vdash g_1 \wt
		\and
		\Gamma;\Delta \vdash g_2 \wt
	}{
		\Gamma;\Delta \vdash \texttt{(disj \(g_1\) \(g_2\))} \wt
	}
	\and
	\infer{
		\Delta \vdash \tau \vt
		\and
		\Gamma;\Delta,x:\tau \vdash g \wt
	}{
		\Gamma;\Delta \vdash \texttt{(fresh ((\(x:\tau\))) \(g\))} \wt
	}
	\and
	\infer{
		\Delta \vdash v_1 : \tau
		\and
		\Delta \vdash v_2 : \tau
	}{
		\Gamma; \Delta \vdash \texttt{(== \(v_1\) \(v_2\))} \wt
	}
	\and
	\infer{
		\Delta \vdash v_1 : \tau
		\and
		\Delta \vdash v_2 : \tau
	}{
		\Gamma; \Delta \vdash \texttt{(=/= \(v_1\) \(v_2\))} \wt
	}
	\and
	\infer{
		\Delta \vdash \vec v : \sigma(\vec\tau)
	}{
		\Gamma, (R : \forall \vec\alpha . \vec\tau \rightarrow); \Delta \vdash \texttt{(\(R\) \(\vec v\))} \wt
	}
	\and
	\infer{ }{
		\Gamma; \Delta \vdash \texttt{(factor \(r\))} \wt
	}
\end{mathpar}
\caption{Inference rules for polymorphic goal typing judgement.}
	\label{fig:poly-goal-types}
\end{figure}

\begin{figure}
\begin{mathpar}
	\infer{
		\Gamma ; \vec \alpha:*, \vec x:\vec \tau \vdash g \wt
	}{
		\Gamma \vdash \texttt{(defrel (\(R\) \(\forall \vec \alpha\).(\(\vec x : \vec \tau\))) \(g\)))} \wt
	}
	\and
	\infer{
		(R_1 : \forall \vec \alpha_1 . \vec \tau_1 \rightarrow), \dots, (R_N : \forall \vec \alpha_N . \vec \tau_N \rightarrow) \vdash
		\texttt{(defrel (\(R_1\) \(\forall \vec \alpha_1\).(\(\vec x_1 : \vec \tau_1\))) \(g_1\))} \wt
		\\\\
		\vdots
		\\\\
		(R_1 : \forall \vec \alpha_1 . \vec \tau_1 \rightarrow), \dots, (R_N : \forall \vec \alpha_N . \vec \tau_N \rightarrow) \vdash
		\texttt{(defrel (\(R_N\) \(\forall \vec \alpha_N\).(\(\vec x_N : \vec \tau_N\))) \(g_1\))} \wt
	}{
		\texttt{(defrel (\(R_1\) \(\forall \vec \alpha_1\).(\(\vec x_1:\vec \tau_1\))) \(g_1\)))}, \dots, \\\\
		\texttt{(defrel (\(R_N\) \(\forall \vec \alpha_N\).(\(\vec x_N:\vec \tau_N\))) \(g_N\)))} \wt
	}
\end{mathpar}
\caption{Inference rules for polymorphic relation and program typing judgements.}
	\label{fig:poly-relation-types}
\end{figure}

As before, the \(\Gamma\) context holds relation types, and the \(\Delta\) context holds variable types.
We introduce a new \(\Delta \vdash \tau \vt\) judgement (defined in figure \ref{fig:poly-types-valid}), which asserts that the type \(\tau\) is valid under the \(\Delta\) context.
The judgements within the \(\Gamma\) context are extended to support type variables: \((R : \forall \alpha_1, \dots, \alpha_m . \tau_1,\dots,\tau_n \rightarrow)\).
As before, \(\Delta\) holds value typing judgements (of the form \(x:\tau\)). 
\(\Delta\) is also extended to hold judgements of the form \(\alpha : *\), indicating that the type variable \(\alpha\) can be treated as a valid type in the current context.
In both cases, type variables may now appear in \(\tau\).

We use \(\Delta \vdash v:\tau\) to denote the value (and therefore variable) typing judgement asserting that value \(v\) has type \(\tau\).
The value typing judgement is shown in figure \ref{fig:poly-value-types}.
The \(\Gamma;\Delta \vdash g \wt\) judgement denotes that the goal \(g\) is well-typed under the \(\Gamma\) and \(\Delta\) contexts, as shown in figure \ref{fig:poly-goal-types}.
When typechecking polymorphic relation calls, we need to verify that each callee type variable is consistently used at the same type by the caller.
For example, for a relation of type \((R:\forall \alpha . \alpha, \alpha \to)\), we should not be able to call it as \(\texttt{(\(R\) sole (left sole))}\),
where \(\alpha\) gets mapped to both \(\texttt{Unit}\) and \(\texttt{(Sum Unit Unit)}\) within a single call.
To accomplish this, we use a \emph{type variable substitution}, notated \(\sigma\), which consistently replaces type variables in the callee's type with types that are valid under the caller's context.
As long as some such \(\sigma\) exists for which the arguments typecheck correctly, the relation call itself typechecks.
The \(\Gamma \vdash Rel \wt\) judgement denotes that the relation \(Rel\) is well-typed under the \(\Gamma\) context, and the \(P \wt\) judgement denotes that the entire program is well-typed.
The relation and program typing judgements are shown in figure \ref{fig:poly-relation-types}.

In the formal semantics type variables are explicitly specified, but in practice they can be automatically extracted by looking at the argument types.
Also note that in polymorphic relation calls, type variable substitutions may replace callee type variables with caller type variables.
This is expected, as the caller itself may be a polymorphic relation, and have type variables which are valid types under \(\Delta\).
Finally, note we use the shorthand \(\alpha \in \tau\) to say ``the type variable \(\alpha\) occurs somewhere within type \(\tau\).''

\section{Monomorphizing Semantics}

Here we update the denotational semantics to handle type variables.
We use the term ``concrete types'' for types that do not contain type variables.
Here, we specifically use type variable substitutions to turn arbitrary types (which can contain type variables) into concrete types.

Temporarily ignoring substitutions, imagine we wish to operate on values of an unknown type.
To represent such a value with the array-based semantics, we need to know the size of the type so we can give it a large-enough array dimension.
Once we have space for the type, we can easily define equality and disequality operations on the type by using diagonals and off-diagonals, as before.
This generalizes to values that may only partially consist of an unknown type.
Because the specifics of the types are unknown, we have no other operations.
Thus, we can treat all unknown types of the same size equivalently.
As such, for each relation \(R\) with type variables \(\alpha, \dots, \beta\),
we generate concrete ``monomorphic'' instances of these relations \(R_{\sigma_1}, \dots, R_{\sigma_n}\), for each unique-up-to-size type variable substitution \(\sigma_1, \dots, \sigma_n\).
Note that we can easily generate types of size \(n\) as a sum type of \(n\) \(\texttt{Unit}\)s, so generating these substitutions is straightforward.

\begin{figure}
\begin{align*}
	\den{\texttt{Unit}} &= \{\texttt{sole}\} \\
	\den{\texttt{(Sum \(\tau_1\) \(\tau_2\))}} &= \{\texttt{(left \(v\))} \mid v \in \den{\tau_1}\} \cup \{\texttt{(right \(v\))} \mid v \in \den{\tau_2}\} \\
	\den{\texttt{(Prod \(\tau_1\) \(\tau_2\))}} &= \{\texttt{(pair \(v_1\) \(v_2\))} \mid v_1 \in \den{\tau_1}, v_2 \in \den{\tau_2}\} \\
\end{align*}
\caption{Denotational semantics for types, again.}
	\label{fig:poly-types-semantics}
\end{figure}

\begin{figure}
\begin{align*}
	\den{\texttt{sole}} (\delta) &= \texttt{sole} \\
	\den{\texttt{(left \(v\))}} (\delta) &= \texttt{(left \(\den{v} (\delta) \))} \\
	\den{\texttt{(right \(v\))}} (\delta) &= \texttt{(right \(\den{v} (\delta) \))} \\
	\den{\texttt{(pair \(v_1\) \(v_2\))}} (\delta) &= \texttt{(pair \(\den{v_1} (\delta) \) \(\den{v_2} (\delta)\))} \\
	\den{x} (\delta) &= \delta(x)
\end{align*}
\caption{Denotational semantics for values, again.}
	\label{fig:poly-values-semantics}
\end{figure}

\begin{figure}
\begin{align*}
	\den{\texttt{(conj \(g_1\) \(g_2\))}} (\gamma;\delta;\sigma) &= \den{g_1} (\gamma;\delta;\sigma) \times \den{g_2} (\gamma;\delta;\sigma) \\
	\den{\texttt{(disj \(g_1\) \(g_2\))}} (\gamma;\delta;\sigma) &= \den{g_1} (\gamma;\delta;\sigma) + \den{g_2} (\gamma;\delta;\sigma) \\
	\den{\texttt{(fresh ((\(x : \tau\))) \(g\))}} (\gamma;\delta;\sigma) &= \sum_{v \in \den{\sigma(\tau)}} \den{g} (\gamma;\delta,x \mapsto v;\sigma) \\
	\den{\texttt{(== \(v_1\) \(v_2\))}} (\gamma;\delta;\sigma) &=
	\begin{cases}
		1 & \textrm{if } \den{v_1} (\delta) = \den{v_2} (\delta) \\
		0 & \textrm{if } \den{v_1} (\delta) \ne \den{v_2} (\delta)
	\end{cases} \\
	\den{\texttt{(=/= \(v_1\) \(v_2\))}} (\gamma;\delta;\sigma) &=
	\begin{cases}
		1 & \textrm{if } \den{v_1} (\delta) \ne \den{v_2} (\delta) \\
		0 & \textrm{if } \den{v_1} (\delta) = \den{v_2} (\delta)
	\end{cases} \\
	\den{\texttt{(\(R\) \(\vec v\))}} (\gamma;\delta;\sigma) &= \gamma(R_{\sigma'})(\den{\vec v}(\delta))
	\textrm{ where } \vec v : \sigma'(\vec\tau) \\
	\den{\texttt{(factor \(r\))}} (\gamma;\delta;\sigma) &= r
\end{align*}
\caption{Denotational semantics for polymorphic goals.}
	\label{fig:poly-goals-semantics}
\end{figure}

\begin{figure}
\begin{gather*}
	\den{\texttt{(defrel (\(R_{\sigma}\) \(\forall \vec \alpha\) . (\(\vec x : \sigma (\vec \tau)\)) g)}} (\gamma) =
	(\vec i) \mapsto \den{g} (\gamma; \vec x \mapsto \den{\vec i};\sigma) \\
	\den{\textit{Rel}_1, \dots, \textit{Rel}_Q} = \;
	\mathrel{\textrm{fix}} \gamma \mathrel{\textrm{in}} R_1 \mapsto \den{\textit{Rel}_1} (\gamma), \dots, R_Q \mapsto \den{\textit{Rel}_Q} (\gamma)
\end{gather*}
\caption{Denotational semantics for polymorphic relations and programs.}
	\label{fig:poly-relations-semantics}
\end{figure}

As before, \(\gamma\) is a relation environment and matches the relation type context.
More specifically, \(\gamma\) is a map from relation names to multidimensional arrays.
We use the term ``concrete value'' to refer to values constructed from \texttt{sole}, \texttt{(left \(v\))}, \texttt{(right \(v\))}, and \texttt{(pair \(v_1\) \(v_2\))}; concrete values cannot contain variables.
Let \(\delta\) be a value environment with type \(\sigma(\Delta)\).
As before, \(\delta\) is a map from variable names to concrete values.
The denotation for types (figure \ref{fig:poly-types-semantics}) is still as a set of specific values.
We handle type variables by applying \(\sigma\) before taking the denotation of a type.
The denotation for values (figure \ref{fig:poly-values-semantics}) under a variable context is a specific concrete value.
The denotation for goals (figure \ref{fig:poly-goals-semantics}) is expressed as a weight for each assignment of values to variables,
but should be considered as a multidimensional array across all possible assignments.
We assume that all needed monomorphic instances of each relation are included in the full program before fixpointing,
and handle polymorphic relation calls by calling the appropriate monomorphized instance.
The semantics for relations and programs are given in figure \ref{fig:poly-relations-semantics}.

We refer to polymorphism based on type variables as \emph{parametric polymorphism}.
Historically, parametric polymorphism in functional languages disallows operations on variable-typed values other than move and copy.
While unification is roughly analogous with moving and copying, disunification does not have a direct analog.
The most similar operation might be a boolean disequality test, which is usually disallowed.
That said, we consider it fair to include disunification because we can implement it in a type-generic way similar to unification.

\section{Non-Monomorphizing Semantics}

\subsection{Equality Patterns}

In practice, semiringKanren avoids the need for monomorphization by using \emph{equality patterns} and \emph{large-enough instances} of polymorphic relations.
Let us revisit the \(\texttt{sum-swap}\) relation:
\begin{equation}
\begin{aligned}
	& \texttt{(defrel (sum-swap (\(x : \texttt{(Sum \(\alpha\) \(\beta\))}\)) (\(y : \texttt{(Sum \(\beta\) \(\alpha\))}\)))} \\
	& \quad \texttt{(disj} \\
	& \quad \quad \texttt{(fresh ((\(a : \alpha\)))} \\
	& \quad \quad \quad \texttt{(conj} \\
	& \quad \quad \quad \quad \texttt{(== \(x\) (left \(a\)))} \\
	& \quad \quad \quad \quad \texttt{(== \(y\) (right \(a\)))))} \\
	& \quad \quad \texttt{(fresh ((\(b : \beta\)))} \\
	& \quad \quad \quad \texttt{(conj} \\
	& \quad \quad \quad \quad \texttt{(== \(x\) (right \(b\)))} \\
	& \quad \quad \quad \quad \texttt{(== \(y\) (left \(b\)))))))} \\
\end{aligned}
\end{equation}
When \(|\alpha| = 3\) and \(|\beta| = 3\), \(\texttt{sum-swap}\) is denoted as follows (vertical axis is \(x\), horizontal axis is \(y\)):
\begin{equation}
	\begin{bmatrix}
		0 & 0 & 0 & 1 & 0 & 0 \\
		0 & 0 & 0 & 0 & 1 & 0 \\
		0 & 0 & 0 & 0 & 0 & 1 \\
		1 & 0 & 0 & 0 & 0 & 0 \\
		0 & 1 & 0 & 0 & 0 & 0 \\
		0 & 0 & 1 & 0 & 0 & 0
	\end{bmatrix}
\end{equation}
When we keep \(|\alpha| = 3\) but change \(|\beta| = 4\), \(\texttt{sum-swap}\) is denoted as the slightly larger array:
\begin{equation}
	\begin{bmatrix}
		0 & 0 & 0 & 0 & 1 & 0 & 0 \\
		0 & 0 & 0 & 0 & 0 & 1 & 0 \\
		0 & 0 & 0 & 0 & 0 & 0 & 1 \\
		1 & 0 & 0 & 0 & 0 & 0 & 0 \\
		0 & 1 & 0 & 0 & 0 & 0 & 0 \\
		0 & 0 & 1 & 0 & 0 & 0 & 0 \\
		0 & 0 & 0 & 1 & 0 & 0 & 0 \\
	\end{bmatrix}
\end{equation}
This clearly has a similar structure to the \(|\alpha|=|\beta|=3\) array: diagonal sub-matrices in the top right and bottom left corners, and zeros everywhere else.

Now, imagine trying to reconstruct the second relation array from the first, knowing only the type signatures.
For each entry in the second array, we consider the current value environment (which we can reconstruct from the index of the current entry),
and try to look up an entry in the first array with a matching value environment.
(For brevity, we use numbers subscripted with type size to denote the possible values here).
For example, the top-left entry of the second array has value environment \(x = \texttt{(left \(\mathtt{0_3}\))}, y = \texttt{(left \(\mathtt{0_4}\))}\).
Any entry in the first array with \(x = \texttt{(left \(\dots\))}\) and \(y = \texttt{(right \(\dots\))}\) has weight \(0\), thus we assume this entry in the second array should also have weight \(0\).
Jumping forward to the fifth entry in the first row: we know this entry has value environment \(x = \texttt{(left \(\mathtt{0_3}\))}, y = \texttt{(right \(\mathtt{0_3}\))}\)
The \(\alpha\)-typed parts of these variables both have value \(\mathtt{0_3}\); they are equal.
Similar to before, we can look up any any entry in the first array where \(x = \texttt{(left \(a\))}\), \(y = \texttt{(right \(b\))}\), and we also impose that \(a = b\).
These are the entries along the top right diagonal of the first array.
These entries all have weight \(1\), so we adopt that weight here.
Similarly, any entry in the top-right where \(a \ne b\), such as \(x = \texttt{(left \(\mathtt{1_3}\))}\) and \(y = \texttt{(right \(\mathtt{2_3}\))}\), has a weight of \(0\) that we adopt.

More generally, for each entry in the second array, we consider the \emph{equality patterns} of the value environment at that entry, and look up an entry with the same equality patterns in the first array.
When we know the non-variable-typed parts of the environment (such as the top-level \(\texttt{left}\) or \(\texttt{right}\) in the previous example),
we want to look up an entry whose non-variable-typed parts are the same.
When we know the variable-typed parts of the environment, we consider whether they are equal or disequal to other variable-typed parts,
and look up an entry that has the same equalities and disequalities.
Because in semiringKanren the only primitive relations on variable-typed values are direct equality or disequality,
considering solely the equalities/disequalities between the variable-typed parts of environments is sufficient.

Specifically, equality patterns are defined in terms of \emph{shells} and \emph{holes} of value environments.
The shell of a value environment is the non-variable-typed part, and the holes in a value environment are the variable-typed parts.
Note that holes may come from different type variables; this is important, as we can only check equality/disequality for holes of the same type variable.
We say that two value environments \(\delta_1, \delta_2\) have the same equality pattern, notated \(\delta_1 \eqpatD \delta_2\),
when they have the same shell, and when all equal pairs of holes of a type variable in \(\delta_1\) are also equal in \(\delta_2\) (and same for disequality).

For example, consider \(\delta_1 = \sigma_1(\Delta)\) and \(\delta_2 = \sigma_2(\Delta)\), with \(\Delta\) as follows:
\begin{equation}
	\Delta = x : \texttt{(Sum \(\alpha\) \(\alpha\))}, y : \alpha
\end{equation}
We will present potential equality patterns of the form \(\delta_1 \eqpatD \delta_2\), and explain why they do or do not hold.
We underline holes for illustration.
First consider:
\begin{equation}
	x \mapsto \texttt{(left \underline{sole})}, y \mapsto \texttt{\underline{sole}} \eqpatD x \mapsto \texttt{(right \underline{sole})}, y \mapsto \texttt{\underline{sole}} 
\end{equation}
Here \(\alpha = \texttt{Unit}\) for both \(\delta_1\) and \(\delta_2\).
This equality pattern does \emph{not} hold, because the shells of \(x\), \(\texttt{(left \(\dots\))}\) and \(\texttt{(right \(\dots\))}\), are not equal.
\begin{equation}
	x \mapsto \texttt{(left \(\underline{\mathtt{1_2}}\))}, y \mapsto \mathtt{\underline{0_2}} \eqpatD x \mapsto \texttt{(left \(\underline{\mathtt{1_2}}\))}, y \mapsto \mathtt{\underline{1_2}} 
\end{equation}
Here \(|\sigma(\alpha)| = 2\).
This equality pattern does \emph{not} hold, because the holes are disequal under \(\delta_1\) (\(\mathtt{1_2} \ne \mathtt{0_2}\)), but equal under \(\delta_2\) (\(\mathtt{1_2} = \mathtt{1_2}\)).
\begin{equation}
	x \mapsto \texttt{(left \(\underline{\mathtt{1_2}}\))}, y \mapsto \mathtt{\underline{0_2}} \eqpatD x \mapsto \texttt{(left \(\underline{\mathtt{0_2}}\))}, y \mapsto \mathtt{\underline{1_2}} 
\end{equation}
Here again \(|\sigma(\alpha)| = 2\).
This equality pattern \emph{does} hold. The shells are equal,
and the holes are disequal both under \(\delta_1\) (\(\mathtt{1_2} \ne \mathtt{0_2}\)) and under \(\delta_2\) (\(\mathtt{0_2} \ne \mathtt{1_2}\)).
\begin{equation}
	x \mapsto \texttt{(left \(\underline{\mathtt{0_2}}\))}, y \mapsto \mathtt{\underline{0_2}} \eqpatD x \mapsto \texttt{(left \underline{sole})}, y \mapsto \texttt{\underline{sole}} 
\end{equation}
Here \(|\sigma_1(\alpha)| = 2\) and \(\sigma_2(\alpha) = \texttt{Unit}\).
This equality pattern \emph{does} hold. The shells are equal,
and the holes are equal both under \(\delta_1\) (\(\mathtt{0_2} = \mathtt{0_2}\)) and under \(\delta_2\) (\(\texttt{sole} = \texttt{sole}\)).

\subsection{Large-Enough Relation Instances}

For \(\texttt{sum-swap}\), we have determined that we can reproduce a call with \(|\alpha| = 3, |\beta| = 4\) using an array with type variable sizes \(|\alpha| = 3, |\beta| = 3\).
Now we aim to do the same for \(\texttt{two-valued}\):
\begin{equation}
\begin{aligned}
	& \texttt{(defrel (two-valued (\(x : \alpha\)))} \\
	& \quad \texttt{(fresh ((\(y : \alpha\)))} \\
	& \quad \quad \texttt{(=/= \(x\) \(y\))))}
\end{aligned}
\end{equation}
When \(|\alpha| = 2\), \(\texttt{two-valued}\) is denoted as follows:
\begin{equation}
	\begin{bmatrix} 1 & 1 \end{bmatrix} 
\end{equation}
If we aim to use this to reproduce the array for \(\texttt{two-valued}\) when \(|\alpha| = 1\), no matter which entry we look in, we find the weight \(1\).
Thus, the denotation we calculate for \(|\alpha| = 1\) is:
\begin{equation}
	\begin{bmatrix} 1 \end{bmatrix} 
\end{equation}
which indicates success.
But we have already established \(\texttt{(two-valued sole)}\) (which has \(|\alpha| = 1\)) should fail, meaning the correct denotation should be:
\begin{equation}
	\begin{bmatrix} 0 \end{bmatrix} 
\end{equation}

To summarize, the expected behavior of \(\texttt{two-valued}\) is to fail when called with a value with type size \(1\), and succeed for any value with type size greater than \(1\).
This means that we cannot use blindly the equality pattern technique everywhere.
In particular, both the call site and the particular called relation instance must be \emph{large-enough} for equality patterns to work correctly.
Recall that the equality pattern process ``looks up'' locations in the source array where holes have the same equalities/disequalities.
In the most extreme case, all holes might be disequal.
Thus the size of each type variable needs to be ``large-enough'' to hold as many distinct values as there are possible holes.
Holes come from type variables, so we can calculate this value by finding the maximum number of times each type variable appears in an environment,
anywhere within the goal structure of the relation.

For example, in \(\texttt{sum-swap}\), \(\alpha\) and \(\beta\) both appear in the environment at most \(3\) times: twice from the relation arguments, and once under the \(\texttt{fresh}\)es.
Thus \(\texttt{sum-swap}\) is large-enough for \(|\alpha| = 3\) and \(|\beta| = 3\).
In \(\texttt{two-valued}\), \(\alpha\) appears in the environment at most twice: once from the argument, and once from the fresh.
Thus \(\texttt{two-valued}\) is large-enough when \(|\alpha| = 2\).
When a relation call is not large-enough, we do still need to generate a monomorphized version of the called relation.

\subsection{Compiling Polymorphic Programs}

While we describe the equality pattern approach in terms of ``looking up entries in the arrays,'' in practice we implement it by compiling polymorphic semiringKanren programs into non-polymorphic ones.
In particular, polymorphic relation calls are handled in two steps.
First, we call the smallest large-enough relation with fresh variables.
Then, we enforce an equality pattern between the fresh variables and the original arguments.
This is functionally similar to ``looking up'' entries: calling the relation with fresh variables brings the entries into scope, and enforcing the equality pattern makes sure we get the right weight.

Getting this to work involves some tricks.
In particular, we require semiring addition (used in \(\texttt{disj}\) and \(\texttt{fresh}\)) to be idempotent.
Intuitively, in the monomorphizing semantics, \(\texttt{fresh}\) may sum over different numbers of values, which could make goal weights depend on the size of type variables.
For example, consider the following goal:
\begin{equation}
\begin{aligned}
	& \texttt{(fresh ((\(x  : \alpha\)))} \\
	& \quad \texttt{(factor 1))}
\end{aligned}
\end{equation}
The weight from this goal is a sum \(|\alpha|\) \(1\)s, and thus may depend on the size of \(\alpha\).
This is something we are trying to avoid.
When addition is idempotent, we instead have:
\begin{equation}
	\den{\texttt{(fresh ((\(x : \alpha\))) (factor 1))}}(\dots) = 1 + \dots + 1 = 1 
\end{equation}
regardless of the size of \(\alpha\), because idempotence gives us \(1 + 1 = 1\).

\section{Proofs for Non-Monomorphizing Semantics}

We now aim to prove that this compilation scheme works as intended.
Throughout this section, we assume that we are working over a commutative semiring with multiplicative identity \(1\), additive identity/multiplicative annihilator \(0\),
and idempotent addition (so \(a + a = a\)).

\subsection{Equality Patterns Preserve Weight}

\begin{definition}
	We define a \emph{shell} using the following grammar:
	\[ \begin{array}{lrcl}
		\textrm{Shells} & s & ::= & \texttt{sole} \\
		& & | & \texttt{(left \(s\))} \\
		& & | & \texttt{(right \(s\))} \\
		& & | & \texttt{(pair \(s_1\) \(s_2\))} \\
		& & | & \texttt{(hole \(\alpha\))} \\
	\end{array} \]
\end{definition}

Shells are similar to values, but additionally have holes that are associated with a single, non-composite type variable.

Given a concrete value and a type which it inhabits after type variable substitution, we can find its shell as follows:
\begin{equation}
\begin{aligned}
	\shell_\texttt{Unit}(\texttt{sole}) &= \texttt{sole} \\
	\shell_\texttt{(Sum \(\tau_1\) \(\tau_2\))}(\texttt{(left \(v\))}) &= \texttt{(left \(\shell_{\tau_1}(v)\))} \\
	\shell_\texttt{(Sum \(\tau_1\) \(\tau_2\))}(\texttt{(right \(v\))}) &= \texttt{(right \(\shell_{\tau_2}(v)\))} \\
	\shell_\texttt{(Prod \(\tau_1\) \(\tau_2\))}(\texttt{(pair \(v_1\) \(v_2\))}) &= \texttt{(pair \(\shell_{\tau_1}(v_1)\) \(\shell_{\tau_2}(v_2)\))} \\
	\shell_\alpha(v) &= \texttt{(hole \(\alpha\))}
\end{aligned}
\end{equation}

We use \(\holes_{\alpha \in \tau}(v)\) to denote the values of the holes coming from type variable \(\alpha\) within type \(\tau\).
\begin{equation}
\begin{aligned}
	\holes_{\alpha \in \tau}(\texttt{sole}) &= \emptyset \\
	\holes_{\alpha \in \texttt{(Sum \(\tau_1\) \(\tau_2\))}}(\texttt{(left \(v\))}) &= \holes_{\alpha \in \tau_1}(v) \\
	\holes_{\alpha \in \texttt{(Sum \(\tau_1\) \(\tau_2\))}}(\texttt{(right \(v\))}) &= \holes_{\alpha \in \tau_2}(v) \\
	\holes_{\alpha \in \texttt{(Prod \(\tau_1\) \(\tau_2\))}}(\texttt{(pair \(v_1\) \(v_2\))}) &= \holes_{\alpha \in \tau_1}(v_1), \holes_{\alpha \in \tau_2}(v_2) \\
	\holes_{\alpha \in \alpha}(v) &= v \\
	\holes_{\beta \in \alpha}(v) &= \emptyset
\end{aligned}
\end{equation}

We also have versions of these functions that apply to \emph{value environments} (notated as \(\delta\) in the semantics) with pre-substitution type environments.
\begin{equation}
\begin{aligned}
	\envshell_{\emptyset}() &= \emptyset \\
	\envshell_{\Delta, \tau}(\vec x \mapsto \vec v, x \mapsto v) &= \envshell_{\Delta}(\vec x \mapsto \vec v), x \mapsto \shell_{\tau}(v)
\end{aligned}
\end{equation}
\begin{equation}
\begin{aligned}
	\envholes_{\alpha \in \emptyset}() &= \emptyset \\
	\envholes_{\Delta, \alpha \in \tau}(\vec x \mapsto \vec v, x \mapsto v) &= \envholes_{\alpha \in \Delta}(\vec v), \holes_{\alpha \in \tau}(v)
\end{aligned}
\end{equation}

Note that extracted holes are not labeled; we can find the \(\texttt{(hole \(...\))}\) where each extracted value fits by an in-order search through the shell.

\begin{lemma}
	\label{lem:values-shells-holes}
	Let \(\cdot \vdash v_1 : \sigma(\tau)\) and \(\cdot \vdash v_2 : \sigma(\tau)\) be concrete values.
	Then \(v_1 = v_2\) iff \(\shell_\tau(v_1) = \shell_\tau(v_2)\) and \(\holes_{\alpha \in \tau}(v_1) = \holes_{\alpha \in \tau}(v_2) \; \forall \alpha \in \tau\).
\end{lemma}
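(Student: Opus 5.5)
We prove the statement by structural induction on the (pre-substitution) type \(\tau\), simultaneously for all pairs of concrete values \(v_1, v_2\) with \(\cdot \vdash v_i : \sigma(\tau)\). The forward direction is immediate in every case: \(\shell_\tau\) and \(\holes_{\alpha \in \tau}\) are functions of their argument, so \(v_1 = v_2\) forces equal shells and equal hole sequences. The content is the converse. Throughout, we read \(\holes_{\alpha\in\tau}(v)\) as an ordered sequence (a list). With that reading, equality of hole sequences for a pair splits componentwise once we know the shells agree; we discuss this below.

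\textbf{Base cases.} If \(\tau = \texttt{Unit}\), typing forces \(v_1 = v_2 = \texttt{sole}\), so there is nothing to prove. If \(\tau = \alpha\) is a type variable, then \(\alpha \in \tau\), so the hypothesis gives \(\holes_{\alpha\in\alpha}(v_1) = \holes_{\alpha\in\alpha}(v_2)\). That is, \(v_1 = v_2\) directly; the shells are both \(\texttt{(hole \(\alpha\))}\) and carry no information.

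\textbf{Sum case.} Let \(\tau = \texttt{(Sum \(\tau_1\) \(\tau_2\))}\). By the semantics of types, each \(v_i\) is either \(\texttt{(left \(u_i\))}\) or \(\texttt{(right \(u_i\))}\). Equal shells force the same constructor, since \(\shell\) preserves the outer \(\texttt{left}/\texttt{right}\). Say both are \(\texttt{left}\). Then the shells' subterms \(\shell_{\tau_1}(u_1) = \shell_{\tau_1}(u_2)\) agree, and \(\holes_{\alpha\in\tau}(v_i) = \holes_{\alpha\in\tau_1}(u_i)\). For \(\alpha \in \tau_1\) we have \(\alpha \in \tau\), so the hypothesis applies to the \(u_i\). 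The induction hypothesis on \(\tau_1\) gives \(u_1 = u_2\), hence \(v_1 = v_2\). The \(\texttt{right}\) case is symmetric.

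\textbf{Product case.} Let \(\tau = \texttt{(Prod \(\tau_1\) \(\tau_2\))}\) and \(v_i = \texttt{(pair \(a_i\) \(b_i\))}\). Equal shells give \(\shell_{\tau_1}(a_1) = \shell_{\tau_1}(a_2)\) and \(\shell_{\tau_2}(b_1)=\shell_{\tau_2}(b_2)\). The hole sequence is the concatenation \(\holes_{\alpha\in\tau_1}(a_i), \holes_{\alpha\in\tau_2}(b_i)\), and splitting it is the main obstacle.

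To split the concatenation, we use an auxiliary fact proved by a quick induction on \(\tau\): the length of \(\holes_{\alpha\in\tau}(v)\) equals the number of occurrences of \(\texttt{(hole \(\alpha\))}\) in \(\shell_\tau(v)\). This fact is what the paper's remark about an ``in-order search through the shell'' is implicitly using. Since the \(a\)-shells agree, the prefixes \(\holes_{\alpha\in\tau_1}(a_1)\) and \(\holes_{\alpha\in\tau_1}(a_2)\) have equal length. Equality of the concatenations then yields equality of both the prefixes and the suffixes.

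For each \(\alpha \in \tau_1\) we thus obtain the hole hypothesis for the \(a_i\), and similarly for \(\alpha\in\tau_2\) and the \(b_i\). The variables not occurring in \(\tau_j\) are irrelevant, because their hole sequences are empty. Applying the induction hypotheses gives \(a_1 = a_2\) and \(b_1 = b_2\), so \(v_1 = v_2\).
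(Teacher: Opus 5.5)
Your proof is correct and takes the same approach as the paper: the forward direction is immediate, and the converse is a structural induction with Unit, Sum (left/right), Prod, and type-variable cases. You go slightly beyond the paper in the product case, where your auxiliary fact that the length of \(\holes_{\alpha\in\tau}(v)\) equals the number of \texttt{(hole \(\alpha\))} occurrences in \(\shell_\tau(v)\) makes rigorous the paper's informal step of ``matching up the locations of holes'' to split the concatenated hole sequences.
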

\begin{proof}
	Assuming \(v_1 = v_2\), by substitution we trivially have:
	\begin{equation}
		\shell_\tau(v_1) = \shell_\tau(v_2)
	\end{equation}
	\begin{equation}
		\holes_\alpha(v_1) = \holes_\alpha(v_2) \; \forall \alpha \in \tau
	\end{equation}
	Thus the \(\implies\) direction holds.

	Now assume \(\shell_\tau(v_1) = \shell_\tau(v_2)\) and \(\holes_{\alpha \in \tau}(v_1) = \holes_{\alpha \in \tau}(v_2) \; \forall \alpha \in \tau\), and aim to show \(v_1 = v_2\).
	Proof by induction on \(\cdot \vdash v_1 : \sigma(\tau)\) and \(\cdot \vdash v_2 : \sigma(\tau)\).
	\begin{itemize}
		\item Case: \(\tau = \texttt{Unit}\).
			We have \(\shell_{\texttt{Unit}}(v_1) = \shell_{\texttt{Unit}}(v_2)\).
			This is only possible when \(v_1 = v_2 = \texttt{sole}\).
		\item Case: \(\tau = \texttt{(Sum \(\tau_1\) \(\tau_2\))}\).
			We have \(\shell_{\texttt{(Sum \(\tau_1\) \(\tau_2\))}}(v_1) = \shell_{\texttt{(Sum \(\tau_1\) \(\tau_2\))}}(v_2)\).
			This is only possible in two cases.
			\begin{itemize}
				\item Subcase: \(v_1 = \texttt{(left \(v_1'\))}\) and \(v_2 = \texttt{(left \(v_2'\))}\) where \(\shell_{\tau_1}(v_1') = \shell_{\tau_1}(v_2')\).
					We know:
					\begin{equation}
						\holes_{\alpha \in \texttt{(Sum \(\tau_1\) \(\tau_2\))}}(v_1) = \holes_{\alpha \in \texttt{(Sum \(\tau_1\) \(\tau_2\))}}(v_2) \; \forall \alpha \in \tau
					\end{equation}
					so we can deduce by definition:
					\begin{equation}
						\holes_{\alpha \in \tau_1}(v_1') = \holes_{\alpha \in \tau_1}(v_2') \; \forall \alpha \in \tau
					\end{equation}
					By the inductive hypothesis, we now know \(v_1' = v_2'\).
					Thus
					\begin{equation}
						\texttt{(left \(v_1'\))} = \texttt{(left \(v_2'\))}
					\end{equation}
					and so \(v_1 = v_2\).
				\item Subcase: \(v_1 = \texttt{(right \(v_1'\))}\) and \(v_2 = \texttt{(right \(v_2'\))}\) where \(\shell_{\tau_2}(v_1') = \shell_{\tau_2}(v_2')\).
					We know:
					\begin{equation}
						\holes_{\alpha \in \texttt{(Sum \(\tau_1\) \(\tau_2\))}}(v_1) = \holes_{\alpha \in \texttt{(Sum \(\tau_1\) \(\tau_2\))}}(v_2) \; \forall \alpha \in \tau
					\end{equation}
					so we can deduce by definition:
					\begin{equation}
						\holes_{\alpha \in \tau_2}(v_1') = \holes_{\alpha \in \tau_2}(v_2') \; \forall \alpha \in \tau
					\end{equation}
					By the inductive hypothesis, we now know \(v_1' = v_2'\).
					Thus
					\begin{equation}
						\texttt{(right \(v_1'\))} = \texttt{(right \(v_2'\))}
					\end{equation}
					and so \(v_1 = v_2\).
			\end{itemize}
			In either case \(v_1 = v_2\).
		\item Case: \(\tau = \texttt{(Prod \(\tau_1\) \(\tau_2\))}\).
			We have:
			\begin{equation}
				\shell_{\texttt{(Prod \(\tau_1\) \(\tau_2\))}}(v_1) = \shell_{\texttt{(Prod \(\tau_1\) \(\tau_2\))}}(v_2)
			\end{equation}
			This is only possible when \(v_1 = \texttt{(pair \(v_1'\) \(v_1''\))}\) and \(v_2 = \texttt{(pair \(v_2'\) \(v_2''\))}\),
			where \(\shell_{\tau_1}(v_1') = \shell_{\tau_1}(v_2')\) and \(\shell_{\tau_2}(v_1'') = \shell_{\tau_2}(v_2'')\).
			We also have:
			\begin{equation}
				\holes_{\alpha \in \texttt{(Prod \(\tau_1\) \(\tau_2\))}}(v_1) = \holes_{\alpha \in \texttt{(Prod \(\tau_1\) \(\tau_2\))}}(v_2) \; \forall \alpha \in \texttt{(Prod \(\tau_1\) \(\tau_2\))}
			\end{equation}
			By definition of \(\holes\), we therefore have:
			\begin{equation}
				\holes_{\alpha \in \tau_1}(v_1'),\holes_{\alpha \in \tau_1}(v_1'') = \holes_{\alpha \in \tau_1}(v_2'),\holes_{\alpha \in \tau_1}(v_2'')
			\end{equation}
			Because \(\shell_{\tau_1}(v'_1) = \shell_{\tau_1}(v'_2)\) and \(\shell_{\tau_2}(v''_1) = \shell_{\tau_2}(v''_2)\),
			we can match up the locations of holes to deduce:
			\begin{equation}
				\holes_{\alpha \in \tau_1}(v_1') = \holes_{\alpha \in \tau_1}(v_2') \; \forall \alpha \in \tau_1
			\end{equation}
			\begin{equation}
				\holes_{\alpha \in \tau_2}(v_1'') = \holes_{\alpha \in \tau_2}(v_2'') \; \forall \alpha \in \tau_2
			\end{equation}
			By the inductive hypothesis, we have \(v_1' = v_2'\) and \(v_1'' = v_2''\).
			Thus, \(\texttt{(pair \(v_1'\) \(v_1''\))} = \texttt{(pair \(v_2'\) \(v_2''\))}\), and so \(v_1 = v_2\).
		\item Case: \(\tau = \alpha\) (for any type variable).
			By definition we know \(\holes_{\alpha \in \alpha}(v_1) = v_1\) and \(\holes_{\alpha \in \alpha}(v_2) = v_2\).
			Our second hypothesis holds for all type variables, so \(\holes_{\alpha \in \alpha}(v_1) = \holes_{\alpha \in \alpha}(v_2)\).
			Thus by transitivity of equality, \(v_1 = v_2\).
	\end{itemize}
	In any case, we have \(v_1 = v_2\).
\end{proof}

\begin{definition}
	\label{def:eqpat}
	We define \(\delta_1 \eqpatD \delta_2\) when value environments \(\delta_1\) and \(\delta_2\) have the same \emph{equality pattern} under type environment \(\Delta\).
	\begin{itemize}
		\item \(\delta_1 : \sigma_1(\Delta)\) and \(\delta_2 : \sigma_2(\Delta)\), for some type variable substitutions \(\sigma_1, \sigma_2\).
		\item \(\envshell_{\Delta}(\delta_1) = \envshell_{\Delta}(\delta_2)\), and
		\item for all type variables \(\alpha \in \Delta\), for all indices \(i, j\),
			holes \(h_1, k_1\) at those indices in \(\envholes_{\alpha \in \Delta}(\delta_1)\),
			and holes \(h_2, k_2\) at those indices in \(\envholes_{\alpha \in \Delta}(\delta_2)\),
			then \(h_1 = k_1\) and \(h_2 = k_2\), or \(h_1 \ne k_1\) and \(h_2 \ne k_2\).
	\end{itemize}
\end{definition}

Note that because the envshells for each value environment are equal, both environments must have the same number of holes in the same locations;
the third condition only makes sense if the second condition holds.
The third condition boils down to ``if pairs of holes are equal for one value environment, then they must also be equal for the other value environment'' (and similar for disequality).

\begin{lemma}
	\label{lem:eqpat-equivrel}
	\(\eqpatD\) is an equivalence relation.
\end{lemma}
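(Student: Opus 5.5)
We verify the three defining properties of an equivalence relation directly from Definition~\ref{def:eqpat}. Throughout, $\Delta$ is fixed, and we work with value environments $\delta$ satisfying $\delta : \sigma(\Delta)$ for some type variable substitution $\sigma$; this is the domain on which $\eqpatD$ is meaningful. The key observation is that each clause of the definition is either an equation between derived objects ($\envshell_\Delta$ of each side) or a per-index-pair condition of the form ``$h_1 = k_1 \iff h_2 = k_2$''. Both kinds are built from equality in a way that is inherited reflexively, symmetrically, and transitively. We first restate the third clause in this biconditional form: ``$h_1 = k_1$ and $h_2 = k_2$, or $h_1 \ne k_1$ and $h_2 \ne k_2$'' is equivalent to ``$h_1 = k_1 \iff h_2 = k_2$''.

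\emph{Reflexivity.} Take $\delta : \sigma(\Delta)$ and use the same $\sigma$ for both sides of the first clause. The second clause is $\envshell_\Delta(\delta) = \envshell_\Delta(\delta)$, which is trivial. For the third clause, the holes at indices $i,j$ of $\envholes_{\alpha\in\Delta}(\delta)$ are literally the same pair on both sides, so the biconditional $h = k \iff h = k$ holds.

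\emph{Symmetry.} Suppose $\delta_1 \eqpatD \delta_2$ via $\sigma_1,\sigma_2$. Then $\delta_2 \eqpatD \delta_1$ holds via $\sigma_2,\sigma_1$. Envshell equality is symmetric, and the biconditional $h_1 = k_1 \iff h_2 = k_2$ is symmetric in the roles of the two environments.

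\emph{Transitivity.} Suppose $\delta_1 \eqpatD \delta_2$ and $\delta_2 \eqpatD \delta_3$, with $\delta_i : \sigma_i(\Delta)$. The first clause is witnessed by $\sigma_1,\sigma_3$. Envshell equality follows by transitivity of equality of shells. For holes we need one extra step: since all three envshells coincide, the lists $\envholes_{\alpha\in\Delta}(\delta_m)$ for $m=1,2,3$ have the same length and their indices refer to the same hole positions. This is the remark following Definition~\ref{def:eqpat}, and it follows from the inductive definitions of $\shell$ and $\holes$, which emit a hole exactly where the shell has $\texttt{(hole \(\alpha\))}$. Hence for fixed $\alpha$ and indices $i,j$ we may speak of $(h_m,k_m)$ for each $m$ consistently. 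Chaining the two biconditionals then gives $h_1 = k_1 \iff h_2 = k_2 \iff h_3 = k_3$.

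The only step requiring care is this alignment of hole indices across environments. The holes in different $\delta_m$ live in different concrete types $\sigma_m(\alpha)$, so their values cannot be compared across environments. What we need is that index $i$ refers to the same structural position in each environment, and that comparisons are made only within a single environment. If we want this rigorously rather than citing the remark, we prove a small auxiliary fact by induction on $\tau$, parallel to the structure of the proof of Lemma~\ref{lem:values-shells-holes}: the length of $\holes_{\alpha\in\tau}(v)$ equals the number of $\texttt{(hole \(\alpha\))}$ occurrences in $\shell_\tau(v)$, listed in the same order. This fact then extends to environments by induction on $\Delta$.
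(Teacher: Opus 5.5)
Your proof is correct and takes essentially the same route as the paper: you check reflexivity, symmetry and transitivity separately for the shell-equality clause and the per-pair hole clause, and your biconditional restatement is just a compact form of the paper's ``both equal or both disequal'' case analysis. Your explicit handling of the typing clause, and of hole-index alignment via equal envshells, spells out details the paper leaves to the remark after Definition~\ref{def:eqpat}.
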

\begin{proof}
	\(\eqpatD\) is defined as separate conditions on shells and holes.
	The shell condition is direct equality, so it is clearly an equivalence relation.
	It remains to show that the \(\eqpatD\) conditions for holes form an equivalence relation.

	\(\eqpatD\) is obviously reflexive; for holes \(h, k\) at type \(\alpha\), we either have \(h = k\) and \(h = k\), or \(h \ne k\) and \(h \ne k\).
	This boils down to saying that either \(h = k\) or \(h \ne k\); clearly one of these must be true.

	Let \(\delta_1 \eqpatD \delta_2\) be value environments with respective holes at indices \(i, j\): \(h_1, k_1\) and \(h_2, h_2\), for type \(\alpha\).
	We have \(h_1 = k_1\) and \(h_2 = k_2\), or \(h_1 \ne k_1\) and \(h_2 \ne k_2\).
	``And'' is symmetric, so we have \(h_2 = k_2\) and \(h_1 = k_1\), or \(h_2 \ne k_2\) and \(h_1 \ne k_1\).
	This applies to all pairs of holes at any indices, for any type variable, thus we have \(\delta_2 \eqpatD \delta_1\), and so \(\eqpatD\) is symmetric.

	Now assume \(\delta_1 \eqpatD \delta_2\) and \(\delta_2 \eqpatD \delta_3\), and aim to show \(\delta_1 \eqpatD \delta_3\).
	Let \(\alpha\) be a type variable, and choose \(\alpha\)-holes at indices \(i, j\) for each value environment: \(h_1, k_1\), \(h_2, k_2\), and \(h_3, k_3\).
	We have two cases to consider: \(h_1 = k_1\) and \(h_2 = k_2\) and \(h_3 = k_3\), or \(h_1 \ne k_1\) and \(h_2 \ne k_2\) and \(h_3 \ne k_3\).
	Any other cases do not satisfy the equality pattern conditions.
	In the first case, we have \(h_1 = k_1\) and \(h_3 = k_3\).
	In the second case, we have \(h_1 \ne k_1\) and \(h_3 \ne k_3\).
	This applies for any \(\alpha\)-holes at arbitrarily chosen indices \(i, j\) of \(\delta_1, \delta_3\), for any \(\alpha\).
	Thus we have \(d_1 \eqpatD d_3\), and so \(\eqpatD\) is transitive.

	\(\eqpatD\) is defined as separate conditions on shells and holes.
	Both conditions are reflexive, symmetric, and transitive.
	This, \(\eqpatD\) is an equivalence relation.
\end{proof}

\begin{lemma}
	\label{lem:eqpat-substitution}
	Let \(\delta_1, \delta_2\) be value environments where \(\delta_1 \eqpatD \delta_2\).
	Additionally, let \(\Delta \vdash v_1, v_2 : \tau\) be values.
	Then \(\den{v_1}(\delta_1) = \den{v_2}(\delta_1)\) if and only if \(\den{v_1}(\delta_2) = \den{v_2}(\delta_2)\).
\end{lemma}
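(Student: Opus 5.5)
The plan is to reduce the equality of two denoted values to equality of their shells and holes, using Lemma~\ref{lem:values-shells-holes}, and then show that both pieces of data are determined by the equality pattern of the environment. Write \(\delta_1 : \sigma_1(\Delta)\) and \(\delta_2 : \sigma_2(\Delta)\). For \(i \in \{1,2\}\), the values \(\den{v_1}(\delta_i)\) and \(\den{v_2}(\delta_i)\) are concrete values of type \(\sigma_i(\tau)\). By Lemma~\ref{lem:values-shells-holes}, \(\den{v_1}(\delta_i) = \den{v_2}(\delta_i)\) holds iff two things hold: \(\shell_\tau\) of both sides agree, and, for every \(\alpha \in \tau\), \(\holes_{\alpha\in\tau}\) of both sides agree as sequences. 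It therefore suffices to show that each of these two conditions holds under \(\delta_1\) exactly when it holds under \(\delta_2\).

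The key auxiliary step is a structural description of \(\den{v}(\delta)\) for any \(\Delta \vdash v : \tau\), proved by induction on the typing derivation of \(v\). The claim has two parts.
\begin{itemize}
\item The shell \(\shell_\tau(\den{v}(\delta))\) is obtained from the syntax of \(v\) by replacing each variable occurrence \(x\) (with \(x : \tau_x \in \Delta\)) by the entry of \(\envshell_\Delta(\delta)\) at \(x\).
\item The sequence \(\holes_{\alpha\in\tau}(\den{v}(\delta))\) is the concatenation, in left-to-right order of variable occurrences in \(v\), of the blocks \(\holes_{\alpha\in\tau_x}(\delta(x))\). Each such block is a contiguous run of entries of \(\envholes_{\alpha\in\Delta}(\delta)\), at positions that depend only on \(\Delta\) and the common envshell.
\end{itemize}
The cases \texttt{sole}, \texttt{left}, \texttt{right} and \texttt{pair} follow directly from the definitions of \(\shell\) and \(\holes\). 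The variable case is the definition itself. In effect, each value \(v\) determines, given the shell, a fixed list of indices into \(\envholes_{\alpha\in\Delta}\), and the same list serves for \(\delta_1\) and \(\delta_2\).

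With this description the shell condition is immediate: \(\envshell_\Delta(\delta_1) = \envshell_\Delta(\delta_2)\), so \(\shell_\tau(\den{v_j}(\delta_1)) = \shell_\tau(\den{v_j}(\delta_2))\) for \(j=1,2\). Hence the shells of \(v_1\) and \(v_2\) agree under \(\delta_1\) iff they agree under \(\delta_2\). For the holes, suppose the shells agree, so that the two hole sequences have equal length for each \(\alpha\). The \(m\)-th \(\alpha\)-hole of \(\den{v_1}(\delta_i)\) is the entry of \(\envholes_{\alpha\in\Delta}(\delta_i)\) at some index \(p_m\). The \(m\)-th \(\alpha\)-hole of \(\den{v_2}(\delta_i)\) is the entry at some index \(q_m\). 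Both \(p_m\) and \(q_m\) are independent of \(i\). The third clause of Definition~\ref{def:eqpat}, applied with indices \(p_m, q_m\), then says these two holes are equal under \(\delta_1\) iff they are equal under \(\delta_2\). Taking the conjunction over all \(m\) and all \(\alpha\), and combining with the shell condition, gives the lemma in both directions; symmetry also follows from Lemma~\ref{lem:eqpat-equivrel}.

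The main obstacle is making the index bookkeeping rigorous. The holes functions return unlabeled sequences, so the proof must justify that the positions \(p_m, q_m\) depend only on \(\Delta\), \(v\), and the shared envshell, and not on the hole values or on \(\sigma_i\). I would handle this by strengthening the induction hypothesis to produce explicitly an index map from hole positions of \(\den{v}(\delta)\) into positions of \(\envholes_{\alpha\in\Delta}(\delta)\), defined from the envshell alone. The shell-equality hypothesis is what lets the two value sequences be compared position by position; without it, a differing shell already makes both sides false.
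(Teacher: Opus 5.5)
Your proposal is correct and takes essentially the same route as the paper: reduce to shells and holes via Lemma~\ref{lem:values-shells-holes}, carry the shell of each \(v_j\) across the two environments using the shared envshell, and trace each hole back to a fixed index of \(\envholes_{\alpha\in\Delta}\) so that the third clause of Definition~\ref{def:eqpat} transfers each equality or disequality. The only differences are that you prove both directions at once instead of one direction plus symmetry, and that you propose a strengthened induction producing an explicit index map, which makes rigorous the bookkeeping the paper handles informally (``this cleanly extends to arbitrary values'', ``for some \(j_1, j_2\)'').
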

\begin{proof}
	First assume \(\den{v_1}(\delta_1) = \den{v_2}(\delta_1)\) and aim to prove \(\den{v_1}(\delta_2) = \den{v_2}(\delta_2)\).
	Note that the converse holds by symmetry, so we only need to prove one direction.

	We know \(\shell_{\tau}(\den{v_1}(\delta_1)) = \shell_{\tau}(\den{v_2}(\delta_1))\)
	by applying the trivial part of lemma \ref{lem:values-shells-holes}.
	By assumption, \(\delta_1 \eqpatD \delta_2\), so \(\envshell_\Delta(\delta_1) = \envshell_\Delta(\delta_2)\).
	Because the envshells of \(\delta_1\) and \(\delta_2\) are equal, the shells of each variable in \(\delta_1\) and \(\delta_2\) are equal.
	Thus for any valid variable \(x : \tau'\), \(\shell_{\tau'}(\den{x}(\delta_1)) = \shell_{\tau'}(\den{x}(\delta_2))\).
	This cleanly extends to arbitrary values.
	Thus \(\shell_\tau(\den{v_1}(\delta_1)) = \shell_\tau(\den{v_1}(\delta_2))\) and \(\shell_\tau(\den{v_2}(\delta_1)) = \shell_\tau(\den{v_2}(\delta_2))\).
	By transitivity of equality, we therefore know \(\shell_\tau(\den{v_1}(\delta_2)) = \shell_{\tau}(\den{v_2}(\delta_2))\).

	For any \(\alpha \in \tau\) and valid index \(i\), we assert:
	\begin{equation}
		\holes_{\alpha \in \tau}(\den{v_1}(\delta_2))[i] = \holes_{\alpha \in \tau}(\den{v_2}(\delta_2))[i]
	\end{equation}
	Because holes come from type variables, and variable-typed values can only be constructed by by referencing variables in the environment,
	we know the values of all holes under \(\delta_2\) must come from somewhere within \(\delta_2\).
	Thus,
	\begin{equation}
		\holes_{\alpha \in \tau}(\den{v_1}(\delta_2))[i] = \envholes_{\alpha \in \Delta}(\delta_2)[j_1] 
	\end{equation}
	\begin{equation}
		\holes_{\alpha \in \tau}(\den{v_2}(\delta_2))[i] = \envholes_{\alpha \in \Delta}(\delta_2)[j_2] 
	\end{equation}
	for some \(j_1, j_2\).
	We can determine suitable \(j_1, j_2\) by figuring out which variable our current hole comes from,
	and counting how many holes occur before it in the current envshell.
	Because we have already shown that the shells are consistent across \(v_1, v_2\) and \(\delta_1, \delta_2\),
	we can look up the values of the same holes under \(\delta_1\):
	\begin{equation}
		\holes_{\alpha \in \tau}(\den{v_1}(\delta_1))[i] = \envholes_{\alpha \in \Delta}(\delta_1)[j_1] 
	\end{equation}
	\begin{equation}
		\holes_{\alpha \in \tau}(\den{v_2}(\delta_1))[i] = \envholes_{\alpha \in \Delta}(\delta_1)[j_2] 
	\end{equation}
	Because \(\den{v_1}(\delta_1) = \den{v_2}(\delta_2)\) by assumption, we therefore have:
	\begin{equation}
		\holes_{\alpha \in \tau}(\den{v_1}(\delta_1))[i] = \holes_{\alpha \in \tau}(\den{v_2}(\delta_1))[i] 
	\end{equation}
	and so:
	\begin{equation}
		\envholes_{\alpha \in \Delta}(\delta_1)[j_1] = \envholes_{\alpha \in \Delta}(\delta_1)[j_2] 
	\end{equation}
	by transitivity of equality.
	Because of this and \(\delta_1 \eqpatD \delta_2\), we therefore must also have:
	\begin{equation}
		\envholes_{\alpha \in \Delta}(\delta_2)[j_1] = \envholes_{\alpha \in \Delta}(\delta_2)[j_2] 
	\end{equation}
	by definition of \(\eqpatD\).
	Therefore by transitivity of equality:
	\begin{equation}
		\holes_{\alpha \in \tau}(\den{v_1}(\delta_2))[i] = \holes_{\alpha \in \tau}(\den{v_2}(\delta_2))[i] 
	\end{equation}
	This holds for any \(\alpha \in \tau\) and any valid index \(i\), so we have:
	\begin{equation}
		\holes_{\alpha \in \Delta}(\den{v_1}(\delta_2)) = \holes_{\alpha \in \Delta}(\den{v_2}(\delta_2)) \; \forall \alpha \in \tau
	\end{equation}

	We have shown \(\shell_\tau(\den{v_1}(\delta_2)) = \shell_{\tau}(\den{v_2}(\delta_2))\) and
	\(\holes_{\alpha \in \tau}(\den{v_1}(\delta_2)) = \holes_{\alpha \in \tau}(\den{v_2}(\delta_2) \; \forall \alpha \in \tau\).
	Thus by lemma \ref{lem:values-shells-holes}, \(\den{v_1}(\delta_2) = \den{v_2}(\delta_2)\).
\end{proof}

\begin{definition}
	We define the \emph{(maximum) number of occurrences of type variable \(\alpha\)} in type \(\tau\), denoted \(\#_\alpha \tau\), as follows:
	\begin{itemize}
		\item \(\#_\alpha \texttt{Unit} = 0\)
		\item \(\#_\alpha \texttt{(Sum \(\tau_1\) \(\tau_2\))} = \max(\#_\alpha \tau_1, \#_\alpha \tau_2)\)
		\item \(\#_\alpha \texttt{(Prod \(\tau_1\) \(\tau_2\))} = \#_\alpha \tau_1 + \#_\alpha \tau_2\)
		\item \(\#_\alpha \alpha = 1\)
	\end{itemize}
\end{definition}

More specifically, this is the maximum number of \(\alpha\)-typed holes that can occur in a value of type \(\tau\).
This generalizes cleanly to type environments:

\begin{definition}
	We define the \emph{(maximum) number of occurrences of type variable \(\alpha\)} in type environment \(\Delta\), denoted \(\#_\alpha \Delta\), as follows:
	\begin{itemize}
		\item \(\#_\alpha \emptyset = 0\)
		\item \(\#_\alpha \tau, \Delta = \#_\alpha \tau + \#_\alpha \Delta\)
	\end{itemize}
\end{definition}

\begin{definition}
	We define the \emph{(maximum) number of occurrences of type variable \(\alpha\) in goal \(g\) under type environment \(\Delta\)}, denoted \(\#_\alpha g(\Delta)\), inductively:
	\begin{itemize}
		\item \(\#_\alpha \texttt{(conj \(g_1\) \(g_2\))}(\Delta) = \max(\#_\alpha g_1, \#_\alpha g_2)\)
		\item \(\#_\alpha \texttt{(disj \(g_1\) \(g_2\))}(\Delta) = \max(\#_\alpha g_1, \#_\alpha g_2)\)
		\item \(\#_\alpha \texttt{(fresh ((\(x : \tau\))) \(g\))}(\Delta) = \#_\alpha g(\Delta,x:\tau)\)
		\item \(\#_\alpha \texttt{(== \(v_1\) \(v_2\))}(\Delta) = \#_\alpha \Delta\)
		\item \(\#_\alpha \texttt{(=/= \(v_1\) \(v_2\))}(\Delta) = \#_\alpha \Delta\)
		\item \(\#_\alpha \texttt{(\(R\) \(\vec v\))}(\Delta) = \#_\alpha \Delta\)
		\item \(\#_\alpha \texttt{(factor \(r\))}(\Delta) = \#_\alpha \Delta\)
	\end{itemize}
\end{definition}

This makes it easy to count type variable occurrences for relations:

\begin{definition}
	We define the \emph{(maximum) number of occurrences of type variable \(\alpha\) in relation \(R\)}, denoted \(\#_\alpha R\),
	as the (maximum) of occurrences of \(\alpha\) in the top-level goal, with the type environment determined from the arguments.
	\begin{align*}
		\#_\alpha \texttt{(defrel (\(R\) \(\forall \alpha \dots\) (\(x_1:\tau_1\)) \(\dots\) (\(x_n : \tau_1\))) \(g\))} = \#_\alpha g(x_1:\tau_1,\dots,x_n:\tau_n)
	\end{align*}
\end{definition}

\begin{lemma}
	\label{lem:eqpat-extend}
	Let \(\Delta \vdash v_1 : \sigma_1(\tau)\).
	If \(|\sigma_2(\alpha)| \ge \#_\alpha \Delta, x:\tau \; \forall \alpha \in \Delta\),
	and \(\delta_1 \eqpatD \delta_2\),
	then there exists \(\Delta \vdash v_2 : \sigma_2(\tau)\) such that \(\delta_1, x \mapsto v_1 \eqpat_{\Delta, x:\tau} \delta_2, x \mapsto v_2\).
\end{lemma}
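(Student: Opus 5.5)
We construct \(v_2\) explicitly from \(v_1\): copy the shell of \(v_1\), and fill each hole so that equalities and disequalities with earlier holes are reproduced. Concretely, set \(s = \shell_\tau(v_1)\) and take \(v_2\) to have shell \(s\) as well; this is possible because a shell built over \(\tau\) is realizable in \(\sigma_2(\tau)\) once each \(\texttt{(hole \(\alpha\))}\) is filled with some value of \(\sigma_2(\alpha)\). Then \(\envshell_{\Delta,x:\tau}(\delta_1, x\mapsto v_1) = \envshell_{\Delta,x:\tau}(\delta_2, x\mapsto v_2)\) follows from \(\envshell_\Delta(\delta_1)=\envshell_\Delta(\delta_2)\) and the definition of \(\envshell\).

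Now fix a type variable \(\alpha\) and list the \(\alpha\)-holes of \(v_1\) in order as \(h^1,\dots,h^m\). The existing \(\alpha\)-holes of \(\delta_1\) are \(e_1^1,\dots,e_1^n = \envholes_{\alpha\in\Delta}(\delta_1)\), and those of \(\delta_2\) are \(e_2^1,\dots,e_2^n\). We fill the holes of \(v_2\) greedily, in order, with values \(k^1,\dots,k^m\) of \(\sigma_2(\alpha)\):
\begin{itemize}
\item if \(h^t\) equals some earlier hole (an \(e_1^j\) or an \(h^{t'}\) with \(t'<t\)), set \(k^t\) to the corresponding \(e_2^j\) or \(k^{t'}\);
\item otherwise, choose \(k^t\) distinct from all \(e_2^j\) and all \(k^{t'}\) with \(t'<t\).
\end{itemize}
In the first case the choice is well defined regardless of which equal partner we pick. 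If two earlier holes both equal \(h^t\), they are equal to each other in the \(\delta_1\)-side. By \(\delta_1\eqpatD\delta_2\), or by the invariant maintained inductively for earlier \(k\)'s, their counterparts on the \(\delta_2\)-side are also equal.

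The main obstacle is the second case, which needs a fresh value to exist. At that point at most \(n + (t-1)\) values are excluded. Since \(m \le \#_\alpha\tau\) (holes of a value of type \(\tau\) follow a single branch of each sum, so by induction on \(\tau\) their number is bounded by the max/sum definition), we get \(n + t - 1 < n + m \le \#_\alpha\Delta + \#_\alpha\tau = \#_\alpha(\Delta, x:\tau) \le |\sigma_2(\alpha)|\). Hence some value remains. I would state the bound \(|\holes_{\alpha\in\tau}(v)| \le \#_\alpha\tau\) as a small auxiliary claim proved by induction on \(\tau\). Note also that \(n \le \#_\alpha\Delta\) holds by applying that claim variable by variable. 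Type variables \(\alpha\) that occur in \(\tau\) but not in \(\Delta\) are handled identically with \(n=0\), reading the hypothesis as ranging over the type variables of \(\Delta, x:\tau\).

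Finally, we verify the third condition of Definition \ref{def:eqpat} for the extended environments. Take any pair of \(\alpha\)-hole indices. If both lie in \(\Delta\), the condition is inherited from \(\delta_1\eqpatD\delta_2\). If at least one is a new hole \(h^t\), we argue by the construction, inducting on \(t\). If \(h^t\) equals its partner, then \(k^t\) was set equal to the partner's counterpart, or to something equal to it by the invariant. If \(h^t\) differs from its partner, two subcases arise. When \(h^t\) equals some other earlier hole, \(k^t\) equals that hole's counterpart, which is disequal to the partner's counterpart by the invariant. When \(h^t\) is fresh, \(k^t\) is distinct from the partner's counterpart by choice. This establishes \(\delta_1, x\mapsto v_1 \eqpat_{\Delta,x:\tau} \delta_2, x\mapsto v_2\).
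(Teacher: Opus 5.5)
Your proof is correct and follows the paper's construction. You copy the shell of \(v_1\), then fill the holes of \(v_2\) in order, reusing the counterpart of an equal earlier hole when one exists and otherwise choosing a fresh value of \(\sigma_2(\alpha)\); the bound \(|\sigma_2(\alpha)| \ge \#_\alpha(\Delta, x:\tau)\) guarantees that a fresh value is available. You also spell out things the paper leaves implicit: that the copying step is well defined, that hole counts are bounded by \(\#_\alpha\), and that the resulting environments really satisfy the equality-pattern condition.
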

\begin{proof}
	For the equality pattern to continue to hold, we must have \(\shell_\tau(v_1) = \shell_\tau(v_2)\).
	It remains to find appropriate values for the holes of \(v_2\).

	Let \(\alpha \in \tau\), and let \(i\) be the smallest index such that \(\envholes_{\alpha \in \Delta, \tau}(\delta_2)[i]\) does not yet have an assigned value.
	For \(\delta_1\), try to find index \(j < i\) such that \(\envholes_{\alpha \in \Delta, x:\tau}(\delta_1)[i] = \envholes_{\alpha \in \Delta, x:\tau}(\delta_1)[j]\).
	If such \(j\) exists, assign the value of \(\envholes_{\alpha \in \Delta, x:\tau}(\delta_2)[j]\) to \(\envholes_{\alpha \in \Delta, x:\tau}(\delta_2)[i]\).
	If such \(j\) does not exist, choose an \(\sigma_2(\tau)\)-typed value that does not yet occur in \(\envholes_{\alpha \in \Delta, x:\tau}(\delta_2)\),
	and assign it to \(\envholes_{\alpha \in \Delta, \tau}(\delta_2)[i]\).
	Note that it is always possible to find such a value: \(|\sigma_2(\alpha)| \ge \#_\alpha \Delta,x:\tau\),
	so \(\sigma_2(\alpha)\) has at least as many distinct values as there are \(\alpha\)-typed holes.
\end{proof}

\begin{definition}
	Given the context \(\gamma : \Gamma\) and some substitution \(\sigma\), we say a monomorphic instance \(R_{\sigma_1}\) of relation \(R\) is \emph{large-enough} if
	for all \(\Delta \vdash \vec v : \vec \tau\) satisfying \(\Gamma;\Delta \vdash \texttt{(\(R\) \(\vec v\))} \wt\),
	and for all monomorphic instances \(R_{\sigma_2}\) of \(R\) with \(|\sigma_2(\alpha)| \ge |\sigma_1(\alpha)| \; \forall \alpha \in \vec \tau\),
	and for all \(\delta_1 : \sigma_1(\Delta)\), \(\delta_2 : \sigma_2(\Delta)\) with \(\delta_1 \eqpatD \delta_2\),
	the following holds:
	\[\den{\texttt{(\(R_{\sigma_1}\) \(\vec v\))}}(\gamma;\delta_1;\sigma) = \den{\texttt{(\(R_{\sigma_2}\) \(\vec v\))}}(\gamma;\delta_2;\sigma)\]
\end{definition}

\begin{theorem}
	\label{thm:eqpat-weight}
	Let \(\Gamma; \Delta \vdash g \wt\), \(\delta_1 : \sigma_1(\Delta)\), \(\delta_2 : \sigma_2(\Delta)\),
	and all relations in \(\gamma : \Gamma\) be large-enough.
	If \(\delta_1 \eqpatD \delta_2\),
	\(|\sigma_1(\alpha)| \ge \#_\alpha g(\Delta) \; \forall \alpha \in \Delta\),
	\(|\sigma_2(\alpha)| \ge \#_\alpha g(\Delta) \; \forall \alpha \in \Delta\),
	and \(+\) is idempotent (\(a + a = a\)), then
	\(\den{g}(\gamma; \delta_1; \sigma_1) = \den{g}(\gamma; \delta_2; \sigma_2)\).
\end{theorem}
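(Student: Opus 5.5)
We proceed by structural induction on the derivation of \(\Gamma;\Delta \vdash g \wt\), generalizing over \(\Delta\), \(\delta_1\), \(\delta_2\), \(\sigma_1\), and \(\sigma_2\). Before starting we record that \(\#_\alpha g(\Delta) \ge \#_\alpha \Delta\) for every goal; this follows by an easy induction, since fresh only extends \(\Delta\) and the leaves return \(\#_\alpha \Delta\). Consequently the size hypotheses in the statement pass to the subgoals of \texttt{conj} and \texttt{disj}, because the max dominates each component. They also pass to the body of \texttt{fresh}, because \(\#_\alpha\) of a fresh goal is defined to be \(\#_\alpha\) of its body under the extended environment.

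\textbf{Base cases.} For \texttt{factor} both sides equal \(r\) and there is nothing to prove. For \texttt{(== \(v_1\) \(v_2\))} and \texttt{(=/= \(v_1\) \(v_2\))}, Lemma \ref{lem:eqpat-substitution} shows that \(\den{v_1}(\delta_1) = \den{v_2}(\delta_1)\) holds iff \(\den{v_1}(\delta_2) = \den{v_2}(\delta_2)\), so the two sides select the same case and return the same \(0\) or \(1\). For a relation call \texttt{(\(R\) \(\vec v\))}, the monomorphizing semantics evaluates \(R_{\sigma_1'}\) under \(\delta_1\) and \(R_{\sigma_2'}\) under \(\delta_2\), where \(\sigma_i'\) are the instantiations induced by \(\sigma_i\). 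We apply the large-enough hypothesis on \(\gamma\) to equate these. The definition of large-enough requires \(|\sigma_2'(\alpha)| \ge |\sigma_1'(\alpha)|\), so we should go through a common instance, which by symmetry of \(\eqpatD\) (Lemma \ref{lem:eqpat-equivrel}) can be arranged in either direction. If needed, we treat the relation-call case as exactly what ``large-enough'' is meant to supply.

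\textbf{Inductive cases.} For \texttt{conj} and \texttt{disj}, the induction hypothesis applies to each subgoal with the same environments, and congruence of \(\times\) and \(+\) finishes the case.

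The main obstacle is the \texttt{fresh} case, \(\sum_{v \in \den{\sigma_1(\tau)}} \den{g}(\gamma;\delta_1,x\mapsto v;\sigma_1)\) versus the analogous sum over \(\den{\sigma_2(\tau)}\). Here idempotence is used as follows. We show that the two sums have the same \emph{set} of summands; with idempotent, commutative, and associative addition, a finite sum depends only on the set of its terms, so equality of the summand sets gives equality of the sums. For each \(v_1 \in \den{\sigma_1(\tau)}\), Lemma \ref{lem:eqpat-extend} applies because \(|\sigma_2(\alpha)| \ge \#_\alpha g(\Delta,x:\tau) \ge \#_\alpha(\Delta,x:\tau)\). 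It produces \(v_2\) with \(\delta_1,x\mapsto v_1 \eqpat_{\Delta,x:\tau} \delta_2,x\mapsto v_2\), and the induction hypothesis then gives equal summands. The symmetric argument, again using Lemma \ref{lem:eqpat-extend} with roles swapped and the size bound on \(\sigma_1\), shows that every summand on the right appears on the left.

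One care point is that Lemma \ref{lem:eqpat-extend} is stated with a hypothesis on \(\sigma_2\) only, so we invoke it once per direction, using symmetry of \(\eqpatD\) from Lemma \ref{lem:eqpat-equivrel}. A second care point is the degenerate case \(\den{\sigma_i(\tau)} = \emptyset\), which cannot occur since all types are inhabited: Unit is the base type and no empty type exists. The empty sum therefore never arises, and sets of summands are nonempty on both sides.
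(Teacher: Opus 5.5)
Your proposal is correct and follows the paper's proof essentially step for step. Both argue by induction on the goal, use Lemma~\ref{lem:eqpat-substitution} for \texttt{==} and \texttt{=/=}, appeal directly to the large-enough hypothesis for relation calls, and apply Lemma~\ref{lem:eqpat-extend} in both directions so that idempotence collapses the two \texttt{fresh} sums to the same set of summands; your added observations make explicit two points the paper leaves implicit, namely that \(\#_\alpha g(\Delta) \ge \#_\alpha \Delta\) is what licenses invoking Lemma~\ref{lem:eqpat-extend}, and that the relation-call case may need to route through a common larger instance (for which you would still have to build a matching environment, e.g.\ by iterating Lemma~\ref{lem:eqpat-extend}).
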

\begin{proof}
	Proof by induction on the goal structure of \(g\).
	\begin{itemize}
		\item \(g = \texttt{(conj \(g_1\) \(g_2\))}\).
			By the inductive hypothesis:
			\begin{equation}
				\den{g_1}(\gamma; \delta_1; \sigma_1) = \den{g_1}(\gamma; \delta_2; \sigma_2)
			\end{equation}
			\begin{equation}
				\den{g_2}(\gamma; \delta_2; \sigma_2) = \den{g_2}(\gamma; \delta_2; \sigma_2)
			\end{equation}
			Thus:
			\begin{equation}
				\den{g_1}(\gamma; \delta_1; \sigma_1) * \den{g_2}(\gamma; \delta_1; \sigma_1) = \den{g_1}(\gamma; \delta_2; \sigma_2) * \den{g_2}(\gamma; \delta_2; \sigma_2)\
			\end{equation}
			and so \(\den{g}(\gamma; \delta_1; \sigma_1) = \den{g}(\gamma; \delta_2; \sigma_2)\).
		\item \(g = \texttt{(disj \(g_1\) \(g_2\))}\).
			By the inductive hypothesis:
			\begin{equation}
				\den{g_1}(\gamma; \delta_1; \sigma_1) = \den{g_1}(\gamma; \delta_2; \sigma_2)
			\end{equation}
			\begin{equation}
				\den{g_2}(\gamma; \delta_2; \sigma_2) = \den{g_2}(\gamma; \delta_2; \sigma_2)
			\end{equation}
			Thus:
			\begin{equation}
				\den{g_1}(\gamma; \delta_1; \sigma_1) + \den{g_2}(\gamma; \delta_1; \sigma_1) = \den{g_1}(\gamma; \delta_2; \sigma_2) + \den{g_2}(\gamma; \delta_2; \sigma_2)
			\end{equation}
			and so \(\den{g}(\gamma; \delta_1; \sigma_1) = \den{g}(\gamma; \delta_2; \sigma_2)\).
		\item \(g = \texttt{(fresh ((\(x : \tau\))) \(g'\))}\).
			We know: \(|\sigma_1(\alpha)| \ge \#_\alpha g(\Delta) \; \forall \alpha \in \Delta\),
			so \(|\sigma_1(\alpha)| \ge \#_\alpha g'(\Delta,x:\tau) \; \forall \alpha \in \Delta\).
			By lemma \ref{lem:eqpat-extend}, for any \(\sigma_1(\Delta) \vdash v_1 : \sigma_1(\tau)\),
			there exists \(\sigma_2(\Delta) \vdash v_2 : \sigma_2(\tau)\) where \(\delta_1, x \mapsto v_1 \eqpat_{\Delta, x:\tau} \delta_2, x \mapsto v_2\).
			Similarly, \(|\sigma_2(\alpha)| \ge \#_\alpha g'(\Delta,x:\tau) \; \forall \alpha \in \Delta\), so for any \(\sigma_2(\Delta) \vdash v_2 : \sigma_2(\tau)\),
			there exists \(\sigma_1(\Delta) \vdash v_1 : \sigma_1(\tau)\) where \(\delta_1, x \mapsto v_1 \eqpat_{\Delta, x:\tau} \delta_2, x \mapsto v_2\).
			Thus:
			\begin{equation}
				\den{\texttt{(fresh ((\(x : \tau\))) \(g'\))}}(\gamma; \delta_1; \sigma_1) = w_1 + \dots + w_n
			\end{equation}
			\begin{equation}
				\den{\texttt{(fresh ((\(x : \tau\))) \(g'\))}}(\gamma; \delta_2; \sigma_2) = w'_1 + \dots + w'_m
			\end{equation}
			where each weight \(w_1, \dots, w_n\) appears at least once in \(w'_1, \dots, w'_m\),
			and each weight \(w'_1, \dots, w'_m\) appears at least once in \(w_1, \dots, w_n\).
			Let \(r_1, \dots, r_k\) be the unique weights appearing in both sums.
			We require the semiring to be commutative, so we can group identical weights together.
			We therefore have:
			\begin{equation}
				\den{\texttt{(fresh ((\(x : \tau\))) \(g'\))}}(\gamma; \delta_1; \sigma_1) = (r_1 + \dots) + \dots + (r_k + \dots)
			\end{equation}
			\begin{equation}
				\den{\texttt{(fresh ((\(x : \tau\))) \(g'\))}}(\gamma; \delta_2; \sigma_2) = (r_1 + \dots\dots) + \dots + (r_k + \dots\dots)
			\end{equation}
			We require \(+\) to be idempotent.
			Thus:
			\begin{equation}
				\den{\texttt{(fresh ((\(x : \tau\))) \(g'\))}}(\gamma; \delta_1; \sigma_1) = r_1 + \dots + r_k
			\end{equation}
			\begin{equation}
				\den{\texttt{(fresh ((\(x : \tau\))) \(g'\))}}(\gamma; \delta_2; \sigma_2) = r_1 + \dots + r_k
			\end{equation}
			\begin{equation}
				\den{\texttt{(fresh ((\(x : \tau\))) \(g'\))}}(\gamma; \delta_1; \sigma_1) = \den{\texttt{(fresh ((\(x : \tau\))) \(g'\))}}(\gamma; \delta_2; \sigma_2)
			\end{equation}
		\item \(g = \texttt{(== \(v_1\) \(v_2\))}\).
			By lemma \ref{lem:eqpat-substitution}, we either have:
			\begin{equation}
				\den{v_1}(\delta_1) = \den{v_2}(\delta_1)
			\end{equation}
			\begin{equation}
				\den{v_1}(\delta_2) = \den{v_2}(\delta_2)
			\end{equation}
			or
			\begin{equation}
				\den{v_1}(\delta_1) \ne \den{v_2}(\delta_1)
			\end{equation}
			\begin{equation}
				\den{v_1}(\delta_2) \ne \den{v_2}(\delta_2)
			\end{equation}
			In the first case:
			\begin{equation}
				\den{\texttt{(== \(v_1\) \(v_2\))}}(\gamma; \delta_1; \sigma_1) = 1
			\end{equation}
			\begin{equation}
				\den{\texttt{(== \(v_1\) \(v_2\))}}(\gamma; \delta_2; \sigma_2) = 1
			\end{equation}
			In the second case:
			\begin{equation}
				\den{\texttt{(== \(v_1\) \(v_2\))}}(\gamma; \delta_1; \sigma_1) = 0
			\end{equation}
			\begin{equation}
				\den{\texttt{(== \(v_1\) \(v_2\))}}(\gamma; \delta_2; \sigma_2) = 0
			\end{equation}
			In either case, \(\den{g}(\gamma; \delta_1; \sigma_1) = \den{g}(\gamma; \delta_2; \sigma_2)\).
		\item \(g = \texttt{(=/= \(v_1\) \(v_2\))}\).
			By lemma \ref{lem:eqpat-substitution}, we either have
			\begin{equation}
				\den{v_1}(\delta_1) = \den{v_2}(\delta_1)
			\end{equation}
			\begin{equation}
				\den{v_1}(\delta_2) = \den{v_2}(\delta_2)
			\end{equation}
			or
			\begin{equation}
				\den{v_1}(\delta_1) \ne \den{v_2}(\delta_1)
			\end{equation}
			\begin{equation}
				\den{v_1}(\delta_2) \ne \den{v_2}(\delta_2)
			\end{equation}
			In the first case:
			\begin{equation}
				\den{\texttt{(=/= \(v_1\) \(v_2\))}}(\gamma; \delta_1; \sigma_1) = 0
			\end{equation}
			\begin{equation}
				\den{\texttt{(== \(v_1\) \(v_2\))}}(\gamma; \delta_2; \sigma_2) = 0
			\end{equation}
			In the second case:
			\begin{equation}
				\den{\texttt{(=/= \(v_1\) \(v_2\))}}(\gamma; \delta_1; \sigma_1) = 1
			\end{equation}
			\begin{equation}
				\den{\texttt{(== \(v_1\) \(v_2\))}}(\gamma; \delta_2; \sigma_2) = 1
			\end{equation}
			In either case, \(\den{g}(\gamma; \delta_1; \sigma_1) = \den{g}(\gamma; \delta_2; \sigma_2)\).
		\item \(g = \texttt{(\(R\) \(\vec v\))}\).
			\begin{equation}
				\den{\texttt{(\(R\) \(\vec v\))}}(\gamma;\delta_1;\sigma_1) = \gamma(R_{\sigma'})(\den{\vec v}(\delta_1))
			\end{equation}
			\begin{equation}
				\den{\texttt{(\(R\) \(\vec v\))}}(\gamma;\delta_2;\sigma_2) = \gamma(R_{\sigma''})(\den{\vec v}(\delta_2))
			\end{equation}
			By the hypothesis that all relations in \(\gamma\) are large-enough, and that \(\delta_1 \eqpatD \delta_2\), these are equivalent.
		\item \(g = \texttt{(factor \(k\))}\).
			\begin{equation}
				\den{\texttt{(factor \(k\))}}(\gamma; \delta_1; \sigma_1) = k = \den{\texttt{(factor \(k\))}}(\gamma; \delta_2; \sigma_2)
			\end{equation}
	\end{itemize}
\end{proof}

The large-enough condition on \(\gamma\) is restrictive.
For example, given \(\Delta \vdash x : \alpha\) with \(\sigma_1(\alpha) = 1\) and \(\sigma_2(\alpha) = 2\),
then \(\gamma\) is not allowed to hold relations like \(\texttt{two-valued}\).
Here, \(\texttt{(two-valued \(x\))}\) would fail under \(\delta_1\) but succeed under \(\delta_2\).
Ideally, we should use something smarter than \(\#_\alpha g(\Delta)\), so type variable sizes account for these sorts of relation calls.
A smarter variant may have relation calls push size requirements upward to their arguments.
Also note that within the fixpoint process, all relations in \(\gamma\) start out by being large-enough: they consistently fail in all entries.
This theorem lets us conclude that each fixpoint iteration preserves arrays in \(\gamma\) being large-enough,
as long as the goal definitions of the relations only include large-enough relation calls.

\subsection{Compiling Polymorphic Relation Calls}

We now aim to compile arbitrary large-enough relation calls to instead call the smallest large-enough relation instance.
While it remains necessary to keep monomorphized relation instances for calls smaller than the smallest large-enough instance.
this allows us to remove all monomorphized instances larger than the smallest large-enough instance.
In practice, implementation of this compilation scheme is verbose, so we reason here with an idealized version of the process.
Rather than a formal demonstration of correctness, the following proofs aim to sketch an intuitive idea of why this method works.

\begin{lemma}
	\label{lem:no-factor-weight}
	Let \(\Gamma; \Delta \vdash g \wt\).
	If \(g\) does include not include \(\mathtt{factor}\) or relation calls, then for any valid \(\gamma, \delta, \sigma\), \(\den{g}(\gamma;\delta;\sigma)\) either has weight \(0\) or \(1\).
\end{lemma}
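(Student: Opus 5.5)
We prove the stronger statement that \(\den{g}(\gamma;\delta;\sigma) \in \{0,1\}\) for \emph{every} value environment \(\delta : \sigma(\Delta)\), not just the one we start with. The argument is structural induction on the goal \(g\) (equivalently, on the derivation of \(\Gamma;\Delta \vdash g \wt\)). Since \(g\) contains no \(\mathtt{factor}\) and no relation calls, only five forms remain: \(\mathtt{conj}\), \(\mathtt{disj}\), \(\mathtt{fresh}\), \(\mathtt{==}\) and \(\mathtt{=/=}\). Each subgoal inherits the hypothesis, so the inductive hypothesis applies to it. Generalising over \(\delta\) matters in the \(\mathtt{fresh}\) case, where the body is evaluated in extended environments \(\delta, x \mapsto v\).

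The base cases are immediate from figure \ref{fig:poly-goals-semantics}. The denotations of \(\texttt{(== \(v_1\) \(v_2\))}\) and \(\texttt{(=/= \(v_1\) \(v_2\))}\) are defined by cases to be exactly \(1\) or \(0\), depending on whether \(\den{v_1}(\delta) = \den{v_2}(\delta)\). Neither depends on \(\gamma\) or \(\sigma\).

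For \(\texttt{(conj \(g_1\) \(g_2\))}\), the inductive hypothesis gives \(\den{g_1},\den{g_2} \in \{0,1\}\). The product is then in \(\{0,1\}\), since \(0\) is a multiplicative annihilator and \(1\) is the multiplicative identity: \(0\times a = 0\) and \(1 \times 1 = 1\). For \(\texttt{(disj \(g_1\) \(g_2\))}\), the sum is in \(\{0,1\}\) because \(0\) is the additive identity, giving \(0+0=0\) and \(0+1=1+0=1\), and because \(+\) is idempotent by the standing assumption of this section, giving \(1+1=1\). So \(\{0,1\}\) is closed under both operations.

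The \(\mathtt{fresh}\) case is where closure under addition is really needed. Here \(\den{\texttt{(fresh ((\(x:\tau\))) \(g'\))}}(\gamma;\delta;\sigma) = \sum_{v \in \den{\sigma(\tau)}} \den{g'}(\gamma;\delta,x\mapsto v;\sigma)\). The index set \(\den{\sigma(\tau)}\) is finite, since concrete types denote finite sets. Each summand lies in \(\{0,1\}\) by the inductive hypothesis applied to \(g'\) under \(\Delta, x:\tau\) and the extended environment. An induction on the number of summands, using the closure of \(\{0,1\}\) under \(+\) shown above, then places the whole sum in \(\{0,1\}\). The empty sum is \(0\), which is fine.

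The fresh-variable sum is the only step with any subtlety. It relies on idempotence: without \(1+1=1\) the weight could be an arbitrary natural number, as in the \(\texttt{(fresh ((\(x:\alpha\))) (factor 1))}\) example, with the \(\mathtt{factor}\) replaced by a trivially true \(\texttt{(== \(x\) \(x\))}\).
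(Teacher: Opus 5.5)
Your proof is correct and follows the paper's argument. Both use structural induction on \(g\) with the hypothesis quantified over all environments, the annihilator/identity laws for \texttt{conj}, the additive identity plus idempotence for \texttt{disj}, and the same closure of \(\{0,1\}\) applied to the finite sum in the \texttt{fresh} case. The only difference is cosmetic: the paper regroups the \texttt{fresh} sum as \((0+\dots)+(1+\dots)\) by commutativity and notes the sum is nonempty, whereas you fold binary closure over the summands, which also covers an empty sum.
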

\begin{proof}
	Proof by induction on \(g\).
	\begin{itemize}
		\item \(g = \texttt{(conj \(g_1\) \(g_2\))}\).
			By the inductive hypothesis, \(\den{g_1}(\gamma;\delta;\sigma) = 0 \text{ or } 1\), and \(\den{g_2}(\gamma;\delta;\sigma) = 0 \text{ or } 1\).
			Thus the weight of \(\den{g}(\gamma;\delta;\sigma)\) is one of: \(0 * 0\), \(0 * 1\), \(1 * 0\), and \(1 * 1\).
			Because \(0\) is the multiplicative annihlator and \(1\) is the multiplicative identity, the first three cases are equal to \(0\) and the last case is equal to \(1\).
		\item \(g = \texttt{(disj \(g_1\) \(g_2\))}\).
			By the inductive hypothesis, \(\den{g_1}(\gamma;\delta;\sigma) = 0 \text{ or } 1\), and \(\den{g_2}(\gamma;\delta;\sigma) = 0 \text{ or } 1\).
			Thus the weight of \(\den{g}(\gamma;\delta;\sigma)\) is one of: \(0 + 0\), \(0 + 1\), \(1 + 0\), and \(1 + 1\).
			Because \(0\) is the additive identity, the first case is equal to \(0\), and the second and third cases are equal to \(1\).
			Because we require \(+\) to be idempotent, the fourth case is equal to \(1\).
		\item \(g = \texttt{(fresh ((\(x : \tau\))) \(g'\))}\).
			Because the inductive hypothesis applies for any valid \(\delta\), we know \(\den{g'}(\gamma;\delta, x \mapsto v;\sigma) = 0 \text{ or } 1\) for all \(v\).
			Our semiring is commutative, so we can rearrange the sum denoted by the \(\texttt{fresh}\) as follows:
			\begin{equation}
				(0 + \dots) + (1 + \dots)
			\end{equation}
			We know the sum must contain at least one summand (because the smallest semiringKanren type is \(\texttt{Unit}\), so we at least have \(v = \texttt{sole}\)),
			but we could potentially have either no \(0\)s or no \(1\)s in the sum.
			If there are no \(1\)s in the sum, then we have:
			\begin{equation}
				0 + \dots = 0
			\end{equation}
			If we have at least one \(1\) in the sum, then we have:
			\begin{equation}
				(0 + \dots) + (1 + \dots) = 1
			\end{equation}
			This holds because we require addition to be idempotent (so \(1 + 1 = 1\)), and any \(0\)s (if they occur) are the additive identity.
		\item \(g = \texttt{(== \(v_1\) \(v_2\))}\).
			We know \(\den{g}(\gamma;\delta;\sigma) = 0 \text{ or } 1\) by defininition in the denotational semantics.
		\item \(g = \texttt{(=/= \(v_1\) \(v_2\))}\).
			We know \(\den{g}(\gamma;\delta;\sigma) = 0 \text{ or } 1\) by defininition in the denotational semantics.
		\item \(g = \texttt{(\(R\) \(v\) \(\dots\))}\).
			By hypothesis, this case does not occur.
		\item \(g = \texttt{(factor \(r\))}\).
			By hypothesis, this case does not occur.
	\end{itemize}
\end{proof}

We now define a function to generate code to enforce equality patterns between two sub-environments.
Because the generated code should work with potentially any value sub-environments, we define the function in terms of the type sub-environments.
Let \(\Delta_1 = \sigma_1(\Delta)\), \(\Delta_2 = \sigma_2(\Delta)\), and \(\alpha, \dots \in \Delta\).
For convenience, assume \(\envshell\) and \(\envholes\) operate on type sub-environments like \(\Delta_1\), \(\Delta_2\)
by extracting shells or all possible holes as values, written in terms of ancillary variables.
In practice, this is handled manually.
\begin{equation}
\begin{aligned}
	& \enforceeqpatD(\Delta_1; \Delta_2) = \\
	& \quad \texttt{(fresh (\textrm{ancillary variables for \(\envshell\) and \(\envholes\)})} \\
	& \quad \quad \texttt{(conj }\\
	& \quad \quad \quad \textrm{deconstruct \(\Delta_1\) and \(\Delta_2\) into the ancillary variables} \\
	& \quad \quad \quad \texttt{(== \(\envshell_\Delta(\Delta_1)\) \(\envshell_{\Delta}(\Delta_2)\))} \\
	& \quad \quad \quad \textrm{for \(i,j \in 1,\dots,\#_\alpha\Delta\)} \\
	& \quad \quad \quad \quad \texttt{(disj} \\
	& \quad \quad \quad \quad \quad \texttt{(conj} \\
	& \quad \quad \quad \quad \quad \quad \texttt{(== \(\envholes_{\alpha \in \Delta}(\Delta_1)[i]\) \(\envholes_{\alpha \in \Delta}(\Delta_1)[j]\))} \\
	& \quad \quad \quad \quad \quad \quad \texttt{(== \(\envholes_{\alpha \in \Delta}(\Delta_2)[i]\) \(\envholes_{\alpha \in \Delta}(\Delta_2)[j]\)))} \\
	& \quad \quad \quad \quad \quad \texttt{(conj} \\
	& \quad \quad \quad \quad \quad \quad \texttt{(=/= \(\envholes_{\alpha \in \Delta}(\Delta_1)[i]\) \(\envholes_{\alpha \in \Delta}(\Delta_1)[j]\))} \\
	& \quad \quad \quad \quad \quad \quad \texttt{(=/= \(\envholes_{\alpha \in \Delta}(\Delta_2)[i]\) \(\envholes_{\alpha \in \Delta}(\Delta_2)[j]\))))} \\
	& \quad \quad \quad \dots \texttt{))}
\end{aligned}
\end{equation}

Once we have ancillary variables to work with, the structure of the generated code directly mirrors the definition of \(\eqpatD\).
First, we check that the \(\envshell\)s of the two sub-environments are equal.
Then for each pair of holes, we check that either both pairs of holes are equal, or both pairs of holes are disequal.

First note that deconstructing \(\Delta_1\) and \(\Delta_2\) into ancillary variables does not succeed
when the ancillary variable values in the current environment do not match up with the current values inhabiting \(\Delta_1\) and \(\Delta_2\).
When this happens, the whole \(\texttt{conj}\) fails, and so no weight is added into the summation denoted by the outer \(\texttt{fresh}\).
This means that we can ignore these failing cases in our reasoning, and only consider cases where deconstructing \(\Delta_1\) and \(\Delta_2\) succeeds.
The deconstruction process does not need to use \(\texttt{factor}\), so by lemma \ref{lem:no-factor-weight}, it either has weight \(0\) or \(1\).
Thus successes of the deconstruction process have weight \(1\), which is the multiplicative identity, so there is no effect on the weight of the top-level \(\texttt{conj}\).
Overall, we can ignore this subgoal in our reasoning going forward, other than assuming that the environment has been properly deconstructed into ancillary variables.

Similarly note that with how \(\envholes\) operates on type environments, it may process holes that are not valid given the current \(\envshell\) structure.
For example, consider the type environment
\begin{equation}
	\Delta = x : \texttt{(Sum \(\alpha\) \(\alpha\))}
\end{equation}
where \(\sigma_1(\alpha) = \texttt{Unit}\).
Then if \(\envshell_{\alpha \in \Delta}(\Delta_1) = x \mapsto \texttt{(left (hole \(\alpha\)))}\), the first \(\alpha\)-hole (at index \(0\) in \(\envholes_{\alpha \in \Delta}(\Delta_1)\)) is valid,
whereas the second \(\alpha\)-hole (at index \(1\) in \(\envholes_{\alpha \in \Delta}(\Delta_1)\)) is not.
This second \(\alpha\)-hole only exists in the shell \(\texttt{(right (hole \(\alpha\)))}\), which is not the current shell.

In practice, this is not an issue.
It simply means that during the deconstruction process, no constraints are imposed on the ancillary variables corresponding to these holes.
While the ``equality patterning'' part of the generated code may fail for certain assignments of these ancillary variables,
it is guaranteed to succeed for at least one assignment in the outer \(\texttt{fresh}\).
This is because for any two variables of the same type, there must have at least one value for which they are equal.
Nowhere in the generated code causes a failure for these ``unconstrained'' ancillary variables,
so at least the first subcase of the equality patterning \(\texttt{disj}\) is guaranteed to succeed for at least one assignment.
The \(\texttt{fresh}\) sums over all such assignments; our semiring is idempotent and commutative, so if there are multiple such successes, the overall weight is not effected.

\begin{lemma}
	\label{lem:enforce_eqpat}
	Let \(\delta_1 : \Delta_1 = \sigma_1(\Delta)\) and \(\delta_2 : \Delta_2 = \sigma_2(\Delta)\).
	Then for any relation environment \(\gamma\) and type variable substitution \(\sigma\),
	either:
	\[
		\den{\enforceeqpatD(\Delta_1;\Delta_2)}(\gamma;\delta_1,\delta_2;\sigma) = 1
	\]
	and \(\delta_1 \eqpatD \delta_2\),
	or
	\[
		\den{\enforceeqpatD(\Delta_1;\Delta_2)}(\gamma;\delta_1,\delta_2;\sigma) = 0
	\]
	and \(\delta_1 \not\eqpatD \delta_2\).
\end{lemma}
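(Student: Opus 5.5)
The plan is to evaluate the generated goal directly, leaning on Lemma~\ref{lem:no-factor-weight} and on the structure of \(\enforceeqpatD\), which mirrors Definition~\ref{def:eqpat} clause by clause. First, \(\enforceeqpatD(\Delta_1;\Delta_2)\) contains no \(\mathtt{factor}\) and no relation calls. So Lemma~\ref{lem:no-factor-weight} tells us that \(\den{\enforceeqpatD(\Delta_1;\Delta_2)}(\gamma;\delta_1,\delta_2;\sigma)\) is either \(0\) or \(1\), for any \(\gamma\) and \(\sigma\). It then suffices to show that the weight is \(1\) exactly when \(\delta_1 \eqpatD \delta_2\).

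Next I would unfold the outer \(\texttt{fresh}\): its denotation is an idempotent sum, over assignments \(\eta\) to the ancillary variables, of the body \(\texttt{conj}\). By the argument given just before the lemma, summands where the deconstruction subgoal fails contribute \(0\). Summands where it succeeds have deconstruction weight \(1\), and in them the shell variables equal \(\envshell_\Delta(\delta_1)\) and \(\envshell_\Delta(\delta_2)\) and the valid hole variables equal the corresponding entries of \(\envholes_{\alpha\in\Delta}(\delta_1)\) and \(\envholes_{\alpha\in\Delta}(\delta_2)\). For such \(\eta\), the body weight is the product of two kinds of factors:
\begin{itemize}
\item the \(\texttt{==}\) test on envshells, which is \(1\) iff \(\envshell_\Delta(\delta_1)=\envshell_\Delta(\delta_2)\);
\item for each \(\alpha\) and each pair \(i,j\), a \(\texttt{disj}\) which is \(1\) iff the \(i\)th and \(j\)th holes are both equal or both disequal, using idempotence when both branches could fire. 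In fact they are mutually exclusive, so this is just \(0+1\) or \(1+0\).
\end{itemize}
Since \(0\) annihilates and \(1\) is the identity, the body is \(1\) iff every factor is \(1\). That conjunction is literally the conditions of Definition~\ref{def:eqpat} restricted to the holes fixed by \(\eta\).

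The main obstacle is the invalid holes, i.e.\ ancillary variables for holes not present in the current shell, which the deconstruction leaves unconstrained. For the forward direction, assume \(\delta_1 \eqpatD \delta_2\). I would exhibit one successful \(\eta\) by setting each unconstrained hole variable to the value of a fixed valid hole of the same type variable, or to a fixed value of \(\sigma_i(\alpha)\) if there is none. The aim is that unconstrained holes in \(\Delta_1\) and \(\Delta_2\) satisfy matching equalities and disequalities. The cleanest choice is to make all unconstrained holes equal to each other on both sides while keeping them relationally consistent with valid holes.

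This is the delicate point, and I expect it may need the size hypothesis of Lemma~\ref{lem:eqpat-extend}. The plan is to run the greedy construction of Lemma~\ref{lem:eqpat-extend} hole by hole. Copying valid-hole patterns from \(\delta_1\) to \(\delta_2\) should succeed whenever the two sides share envshells, since invalid positions coincide. Idempotence then makes the total sum \(1\).

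Conversely, suppose some \(\eta\) succeeds. Then the envshell test gives equal shells, and the hole tests restricted to valid indices give exactly the third clause of Definition~\ref{def:eqpat}. Hence \(\delta_1\eqpatD\delta_2\), because the definition only quantifies over genuine holes of \(\envholes\) of the concrete environments. Taking the contrapositive, \(\delta_1\not\eqpatD\delta_2\) forces every summand to be \(0\), so the weight is \(0\). This completes the dichotomy.
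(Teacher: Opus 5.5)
Your proposal is correct and follows essentially the paper's argument. The envshell \texttt{==} test and each per-pair \texttt{disj} evaluate to \(1\) exactly when the corresponding clause of Definition~\ref{def:eqpat} holds, so the body is \(1\) under \(\delta_1 \eqpatD \delta_2\) and a single failing conjunct annihilates it otherwise; you differ only in making explicit the sum over ancillary assignments, which the paper handles informally before the lemma.

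You should drop the appeal to Lemma~\ref{lem:eqpat-extend}, whose size hypothesis this lemma does not have. Your first assignment already suffices: set every unconstrained \(\alpha\)-hole, on both sides, to the valid \(\alpha\)-hole at one common index (or to one fixed value of \(\sigma_i(\alpha)\) when there is no valid \(\alpha\)-hole). Then each extra pair test either reduces to a test between valid holes, which agrees across the two sides because \(\delta_1 \eqpatD \delta_2\), or is an equality on both sides.
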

\begin{proof}
	We either have \(\delta_1 \eqpatD \delta_2\), or \(\delta_1 \not\eqpatD \delta_2\).

	First assume \(\delta_1 \eqpatD \delta_2\) and aim to show:
	\begin{equation}
		\den{\enforceeqpatD(\Delta_1;\Delta_2)}(\gamma;\delta_1,\delta_2;\sigma) = 0
	\end{equation}
	\(\delta_1 \eqpatD \delta_2\), so \(\envshell_\Delta(\delta_1) = \envshell_\Delta(\delta_2)\),
	so
	\begin{equation}
		\texttt{(== \(\envshell_{\Delta}(\Delta_1)\) \(\envshell_{\Delta}(\Delta_2)\))}
	\end{equation}
	succeeds under \(\delta_1, \delta_2\).
	Similarly, for any \(\alpha\), for any holes \(h_1, k_1\) at indices \(i, j\) in \(\envholes_{\alpha \in \Delta}(\delta_1)\),
	and holes \(h_2, k_2\) at the same indices \(i, j\) in \(\envholes_{\alpha \in \Delta}(\delta_2)\),
	then either \(h_1 = k_1\) and \(h_2 = k_2\), or \(h_1 \ne k_1\) or \(h_2 \ne k_2\).
	Thus exactly one branch of the \(\texttt{disj}\) succeeds for each \(\alpha, i, j\) (and we can ignore holes that do not ``exist'' here), so all such \(\texttt{disj}\)s have weight 1.
	We have also argued that deconstructing \(\Delta_1, \Delta_2\) into ancillary variables succeeds, and relatedly that we can ignore the outer \(\texttt{fresh}\).
	Thus we have a conjunction of subgoals of weight 1, which is the multiplicative identity, so the total weight is 1.

	Now assume \(\delta_1 \not\eqpatD \delta_2\), and aim to show:
	\begin{equation}
		\den{\enforceeqpatD(\Delta_1;\Delta_2)}(\gamma;\delta_1,\delta_2;\sigma) = 1
	\end{equation}
	We know \(\Delta_1 = \sigma_1(\Delta)\) and \(\Delta_2 = \sigma_2(\Delta)\).
	We have \(\delta_1 \eqpatD \delta_2\), so either \(\envshell_\Delta(\delta_1) \ne \envshell_\Delta(\delta_2)\),
	or for some \(\alpha\) and holes \(h_1, k_1\) at indices \(i, j\) in \(\envholes_{\alpha \in \Delta}(\delta_1)\),
	and holes \(h_2, k_2\) at the same indices \(i, j\) in \(\envholes_{\alpha \in \Delta}(\delta_2)\),
	we have \(h_1 = k_1\) and \(h_2 \ne k_2\), or \(h_1 \ne k_1\) and \(h_2 = k_2\).
	
	In the first case, \(\envshell_\Delta(\delta_1) \ne \envshell_\Delta(\delta_2)\).
	Thus the goal
	\begin{equation}
		\texttt{(== \(\envshell_\Delta(\Delta_1)\) \(\envshell_\Delta(\Delta_2)\))}
	\end{equation}
	fails under \(\delta_1, \delta_2\).
	This is denoted by weight \(0\), which is a multiplicative annihilator.
	Thus the outer \(\texttt{conj}\) has weight \(0\), and so the generated code has weight \(0\) under \(\delta_1, \delta_2\).

	In the second case, we either have \(h_1 = k_1\) and \(h_2 \ne k_2\), or \(h_1 \ne k_1\) and \(h_2 \ne k_2\),
	for holes \(h_1, k_1\) at indices \(i, j\) in \(\envholes_{\alpha \in \Delta}(\Delta_1)\) and holes \(h_2, k_2\) at the same indices \(i, j\) in \(\envholes_{\alpha \in \Delta}(\Delta_2)\).
	Thus, both subcases of the \(\texttt{disj}\) fail, and so the \(\texttt{disj}\) itself fails, returning weight \(0\).
	As before, this forces the weight of the outer \(\texttt{conj}\), and so weight of the generated code as a whole, to be \(0\).

	In either case, if \(\delta_1 \not\eqpatD \delta_2\), then \(\den{\enforceeqpatD(\Delta_1; \Delta_2)}(\gamma;\delta_1,\delta_2;\sigma) = 0\).
\end{proof}

Note that \(\gamma\) and \(\sigma\) are never applied in the generated code, so they have no effect on the final weight.

We can now compile relation calls.
Within context \(\gamma : \Gamma\), suppose that we want to compile large-enough  \(\texttt{(\(R_{\sigma_1}\) \(\vec v\))}\). 
Let \(R_{\sigma_2}\) be the smallest large-enough monomorphic instance of \(R\).
When \(\Delta_1 = \sigma_1(\Delta)\), \(\Delta_2 = \sigma_2(\Delta)\),
we use the notation \(\vec{v}[\Delta_1 \mapsto \Delta_2]\) for replacing variables from sub-environment \(\Delta_1\) with variables from sub-environment \(\Delta_2\).
Given \(\Gamma; \Delta \vdash \texttt{(\(R\) \(\vec v\))} \wt\), we compile as follows:
\begin{equation}
\begin{aligned}
	& \compile(\texttt{(\(R_{\sigma_1}\) \(\vec v\))}) = \\
	& \quad \texttt{(fresh ((\(\delta_2 : \Delta_2 = \sigma_2(\Delta)\))) } \\
	& \quad\quad \texttt{(conj} \\
	& \quad\quad\quad \texttt{(\(R_{\sigma_2}\) \(\vec{v}[\Delta_1 \mapsto \Delta_2]\))} \\
	& \quad\quad\quad \texttt{\(\enforceeqpatD(\Delta_1; \Delta_2)\)))} \\
\end{aligned}
\end{equation}

\begin{theorem}
	\label{thm:compile-poly}
	Given any \(\gamma : \Gamma\) and \(\sigma\), large-enough relation instance \(R_{\sigma_1}\), value environment \(\delta_1 : \Delta_1 =  \sigma_1(\Delta)\),
	and \(\vec v\) satisfying \(\Gamma; \Delta \vdash \texttt{(\(R\) \(\vec v\))} \wt\), it follows:
	\[ \den{\texttt{(\(R_{\sigma_1}\) \(\vec v\))}}(\gamma;\delta_1;\sigma) = \den{\compile(\texttt{(\(R_{\sigma_1}\) \(\vec v\))})}(\gamma;\delta_1;\sigma) \]
\end{theorem}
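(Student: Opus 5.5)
We unfold the right-hand side using the semantics in figure \ref{fig:poly-goals-semantics} and show that the resulting sum collapses to the left-hand side. By the \(\texttt{fresh}\) and \(\texttt{conj}\) clauses,
\[
\den{\compile(\texttt{(\(R_{\sigma_1}\) \(\vec v\))})}(\gamma;\delta_1;\sigma)
= \sum_{\delta_2 : \Delta_2} \den{\texttt{(\(R_{\sigma_2}\) \(\vec v[\Delta_1 \mapsto \Delta_2]\))}}(\gamma;\delta_1,\delta_2;\sigma) \times \den{\enforceeqpatD(\Delta_1;\Delta_2)}(\gamma;\delta_1,\delta_2;\sigma).
\]
Here the multiple fresh variables are unfolded into nested single-variable \(\texttt{fresh}\)es, giving an iterated sum that we regroup by commutativity and associativity. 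The relation call only mentions variables from \(\Delta_2\), so its first factor equals \(\den{\texttt{(\(R_{\sigma_2}\) \(\vec v\))}}(\gamma;\delta_2;\sigma)\) after renaming.

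Next we apply Lemma \ref{lem:enforce_eqpat} to each summand. When \(\delta_1 \not\eqpatD \delta_2\), the second factor is \(0\), and since \(0\) annihilates, the summand vanishes. When \(\delta_1 \eqpatD \delta_2\), the factor is \(1\), and the summand is \(\den{\texttt{(\(R_{\sigma_2}\) \(\vec v\))}}(\gamma;\delta_2;\sigma)\). We now invoke the definition of large-enough for \(R_{\sigma_1}\), together with the symmetry of \(\eqpatD\) from Lemma \ref{lem:eqpat-equivrel}, to rewrite each surviving summand as \(\den{\texttt{(\(R_{\sigma_1}\) \(\vec v\))}}(\gamma;\delta_1;\sigma)\). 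This needs \(|\sigma_2(\alpha)| \ge |\sigma_1(\alpha)|\). That inequality does not literally hold, because \(R_{\sigma_2}\) is the \emph{smallest} large-enough instance. So we use large-enough of \(R_{\sigma_2}\) instead, with \(R_{\sigma_1}\) playing the role of the larger instance, and the roles of the two environments swapped by symmetry. Either way, every nonzero summand equals the single weight \(w = \den{\texttt{(\(R_{\sigma_1}\) \(\vec v\))}}(\gamma;\delta_1;\sigma)\).

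The sum is therefore \(0 + \dots + 0 + w + \dots + w\). Additive identity and idempotence reduce this to \(w\), provided there is \emph{at least one} \(\delta_2\) with \(\delta_1 \eqpatD \delta_2\). If no such \(\delta_2\) exists, the sum is \(0\), which may differ from \(w\).

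This existence claim is the main obstacle. We obtain it by building \(\delta_2\) one variable at a time with Lemma \ref{lem:eqpat-extend}. We start from the empty environments, where \(\eqpat\) holds trivially, and extend variable by variable through \(\Delta\). The side condition \(|\sigma_2(\alpha)| \ge \#_\alpha \Delta\) comes from \(R_{\sigma_2}\) being large-enough, which we read as having type variables at least as large as the occurrence counts needed for the caller's environment. If this is not directly derivable from the definition, we will state it as the intended meaning of ``smallest large-enough instance''.

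A secondary subtlety is the ancillary variables inside \(\enforceeqpatD\), including the holes that are invalid for the current shell. We handle them exactly as in the discussion before Lemma \ref{lem:enforce_eqpat}, using Lemma \ref{lem:no-factor-weight} to justify that they contribute weight exactly \(0\) or \(1\).
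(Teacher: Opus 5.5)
Your argument follows the paper's: split the outer \texttt{fresh}-sum according to whether \(\enforceeqpatD(\Delta_1;\Delta_2)\) succeeds, discard the failing summands with Lemma~\ref{lem:enforce_eqpat}, identify each surviving summand with \(w\) via large-enough, and collapse by idempotence. Both of your refinements are warranted. First, the paper just cites ``\(R_{\sigma_1}\) and \(R_{\sigma_2}\) being large-enough'', but the definition only compares an instance with larger ones. So your version, applying large-enough of \(R_{\sigma_2}\) with \(\delta_2 \eqpatD \delta_1\) from Lemma~\ref{lem:eqpat-equivrel}, is the right way to use it. Second, and more importantly, the paper's last step \((w + \dots) + (0 + \dots) = w\) silently assumes that some \(\delta_2 : \sigma_2(\Delta)\) with \(\delta_1 \eqpatD \delta_2\) exists. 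You are right that this is the crux, and the paper never addresses it.

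The gap is that this existence claim cannot be read into ``smallest large-enough'', because \(\compile\) copies the caller's whole environment \(\Delta\), while large-enough is a property of \(R\) alone.

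Here is a counterexample. Let \(R\) be \(\forall\alpha.(x:\alpha)(y:\alpha)\) with body \texttt{(=/= \(x\) \(y\))}, and let \(\gamma\) map each \(R_\sigma\) to its disequality array.
By Lemma~\ref{lem:eqpat-substitution}, every instance of \(R\) is large-enough, even the one with \(|\alpha| = 1\). Also \(\#_\alpha R = 2\). So whether ``smallest'' is read semantically or via occurrence counts, \(|\sigma_2(\alpha)| \le 2\).
Now take \(\Delta = x:\alpha, y:\alpha, z:\alpha\), \(|\sigma_1(\alpha)| = 3\), and \(\delta_1 = (x \mapsto \mathtt{0_3}, y \mapsto \mathtt{1_3}, z \mapsto \mathtt{2_3})\).
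The left-hand side at \texttt{(\(R\) \(x\) \(y\))} is \(1\). But no \(\delta_2\) can have three pairwise distinct \(\alpha\)-holes in a type of size at most \(2\). Hence \(\enforceeqpatD\) is \(0\) for every \(\delta_2\), and the compiled goal has weight \(0\).

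So, with the definitions as written, the statement fails. What you need is an explicit extra hypothesis tying the callee instance to the caller's environment, for example \(|\sigma_2(\alpha)| \ge \#_\alpha \Delta\) for all \(\alpha \in \Delta\). Under that hypothesis, your iteration of Lemma~\ref{lem:eqpat-extend} from the empty environment goes through, since \(\#_\alpha\) is monotone along prefixes of \(\Delta\).

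The sharp condition is that \(|\sigma_2(\alpha)|\) be at least the number of distinct \(\alpha\)-hole values in \(\delta_1\).
It is necessary, because \(\eqpatD\) forces the correspondence between hole values to be injective.
It is sufficient, because injectively relabelling those values yields a suitable \(\delta_2\) directly.
With such a hypothesis stated explicitly, your proof is complete, and it is more rigorous than the paper's.
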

\begin{proof}
	Let \(w = \den{\texttt{(\(R_{\sigma_1}\) \(\vec v\))}}(\gamma;\delta_1;\sigma)\) and aim to show:
	\begin{equation}
		\den{\compile(\texttt{(\(R_{\sigma_1}\) \(\vec v\))})}(\gamma;\delta_1;\sigma) = w
	\end{equation}

	First consider subcases of the outer \(\texttt{fresh}\) where \(\enforceeqpatD(\Delta_1;\Delta_2)\) fails.
	We know \(0\) is multiplicative annihilator, so these cases all result in weight \(0\) and have no impact on the total weight.

	Now consider subcases of the outer \(\texttt{fresh}\) where \(\enforceeqpatD(\Delta_1;\Delta_2)\) succeeds.
	Within the \(\texttt{fresh}\), our environment is \(\delta_1, \delta_2 : \Delta_1, \Delta_2\).
	Because we assume \(\enforceeqpatD(\Delta_1;\Delta_2)\) succeeds, by lemma \ref{lem:enforce_eqpat} we know \(\delta_1 \eqpatD \delta_2\).
	From this, along \(R_{\sigma_1}\) and \(R_{\sigma_2}\) being large-enough by assumption,
	we know:
	\begin{equation}
		\den{\texttt{(\(R_{\sigma_1}\) \(\vec{v}\))}}(\gamma;\delta_1;\sigma) = \den{\texttt{(\(R_{\sigma_2}\) \(\vec{v}[\Delta_1 \mapsto \Delta_2]\))}}(\gamma;\delta_2;\sigma) = w
	\end{equation}
	where \(R_{\sigma_2}\) is the smallest large-enough instance of \(R\), as determined in the definition of \(\compile\).
	We know \(\vec{v}[\Delta_1 \mapsto \Delta_2]\) does not include any variables that occur in \(\delta_1\),
	so the \(R_{\sigma_2}\) relation call has the same weight under the larger environment \(\delta_1, \delta_2\):
	\begin{equation}
		\den{\texttt{(\(R_{\sigma_2}\) \(\vec{v}[\Delta_1 \mapsto \Delta_2]\))}}(\gamma;\delta_2;\sigma)
		= \den{\texttt{(\(R_{\sigma_2}\) \(\vec{v}[\Delta_1 \mapsto \Delta_2]\))}}(\gamma;\delta_1,\delta_2;\sigma)
		= w
	\end{equation}
	As mentioned, we assume the \(\enforceeqpatD\) call succeeds.
	Thus:
	\begin{multline}
		\den{\texttt{(conj (\(R_{\sigma_2}\) \(\vec{v}[\Delta_1 \mapsto \Delta_2]\)) \(\enforceeqpatD(\Delta_1; \Delta_2)\))}}(\gamma;\delta_1,\delta_2,\sigma) \\
		= w * 1 = w
	\end{multline}

	Our outer fresh is a sum of \(0\)s and \(w\)s. Because our semiring is associative, has idempotent \(+\), and additive identity \(0\), we have:
	\begin{multline}
		\den{\texttt{(fresh ((\(\delta_2 : \Delta_2 = \sigma_2(\Delta)\))) \(\dots\))}}(\gamma;\delta_1;\sigma) \\
		= (w + \dots) + (0 + \dots) = w + 0 = w
	\end{multline}
\end{proof}

Theorem \ref{thm:compile-poly} assures us that compiled large-enough relation calls work as expected.
Theorem \ref{thm:eqpat-weight} assures us that we can construct large-enough relations to begin with.
In practice, not all polymorphic relations need to be large-enough.
When possible, we prefer compiling relation calls as outlined here, but we can always resort to monomorphization when dealing with relation calls that are not large-enough.
As such, this method is not truly ``non-monomorphizing.''
That said, anecdotally many relation calls and definitions \emph{do} end up being large-enough; this method is non-monomorphizing ``most of the time.''

\chapter{Related Work}

\section{miniKanren}

The miniKanren\cite{byrd2012} family of languages consists of relational programming languages that are often shallowly-embedded within a host language (such as Scheme).
Relations in miniKanren are defined in terms of \emph{goals}, which usually include: value unification, fresh variable introduction, and program branching.
Languages in the miniKanren family often excel at code generation based on relational interpreters, with quine generation being a particularly compelling example.
As the name may imply, semiringKanren draws extensive inspiration from miniKanren.
In particular, semiringKanren draws most of its concrete syntax from the the microKanren\cite{hemann2023} variant.

One major difference between semiringKanren and miniKanren is that semiringKanren programs are typed, whereas miniKanren programs are untyped.
That said, typed variants of miniKanren do exist, such as OCanren\cite{kosarev2018} and typedKanren\cite{kudasov2024}.

Some variants of miniKanren support constraints.
In particular, type constraints such as \(\texttt{symbolo}\) and \(\texttt{numbero}\),
and disequality constraints (usually notated with \(\texttt{=/=}\)) are relatively common.
Values in semiringKanren are typed, so semiringKanren avoids the need for specific type constraints.
Disequality constraints are supported by semiringKanren, and are considerably easier to implement than in other miniKanren variants.

Implementations of miniKanren are usually based on top-down searches based on filtering and combining streams of potential solutions.
With semiringKanren, we use a different approach based on bottom-up fact generation, where new facts are gradually derived from existing ones.

As of writing, semiringKanren does not support recursive data types, and has not generated any quines.

\section{Datalog}

Datalog\cite{green2007} is a family of relational programming languages based on finding the \emph{least fixed point} of a database of facts.
Relations in datalog are defined in terms of Horn clauses (disjunctions of conditions that are used to determine a fact).
Datalog queries are executed by repeatedly deriving new facts using these clauses until no new facts are derived, in a process called \emph{bottom-up evaluation}.
This has the advantage of being able to compute queries that may not terminate in top-down relational languages.
The evaluation semantics of semiringKanren is directly based on this bottom-up approach.

While many models of relational algebra are boolean (relations either directly succeed or directly fail), some datalog variants introduce the concept of \emph{annotated relations}.
These generalize relations to \(K\)-relations, where facts are annotated with elements from a commutative semiring.
These annotations can be used to represent, for example, bag relations (where a fact may occur multiple times), or probabilistic databases (where facts may occur with some probability).
In particular, these annotations have been used to track \emph{provenance}, or the reasoning behind the derivation of each fact.
The implementation of semiringKanren is designed to support computation over any valid semiring.

Dyna\cite{francislandau2024} is a descendent of datalog supporting \emph{aggregators}, which permit programmers to choose from a larger set of operations on weights.

Traditionally, datalog has only a string or symbol type.
Other types are constructed by defining unary relations that only succeed for a subset of symbol values\cite{fruhwirth1991}.
In particular, this permits subtyping relations.
This differs from semiringKanren, which uses algebraic data types.
Some datalog implementations may also support numeric types, such as integers or floating-point numbers.

\section{Weighted Logic Languages}

Datalog is not the only relational programming language that supports annotating facts with weights.
In particular, there are variants of miniKanren that support probabilistic logic.
The slpKanren\cite{byrd2013} variant is based on stochastic logic, which generalizes stochastic grammars and requires disjunctions to be annotated with probabilities.
The probKanren\cite{zinkov2021} variant allows probabilistically drawing variables from both continuous and discrete distributions,
and performs probabilistic inference using a Sequential Monte Carlo method.
A variant based on weighted model counting\cite{donahue2022} also exists, which tracks the likelihood of facts across possible distributions.
Of these variants, semiringKanren is most similar to slpKanren (when computing over the usual semiring for real numbers).

Probabilistic logic programming has also been widely explored outside of the miniKanren family of languages\cite{riguzzi2022}.
Probabilistic variants of Prolog include ProbLog, Probabilistic Horn Abduction/Independent Choice Logic, and PRISM,
all of which are based on annotating disjunctions with probabilities, and have the same expressive power.
Bayesian Logic Programs, CLP(BN), and the Prolog Factor language are other probabilistic Prolog variants.

\section{Pixel Arrays}

Pixel arrays\cite{spivak2017} are a method of approximating solutions to nonlinear systems of equations.
Broadly, the equations are first discretized, then plotted in arrays.
Then, a generalized form of matrix multiplication combines these arrays to reach a final result.
More specifically, it aligns the matrices according to unexposed variables, multiplies them element-wise, then sums over the unexposed variable.
After the process is complete, the resulting array is a plot of potential solutions, up to some level of precision depending on the initial discretization.

This method is similar to how goals in semiringKanren are evaluated.
In semiringKanren, using a fresh variable to call multiple relations in conjunction is mathematically equivalent to the generalized array multiplication method used for pixel arrays.
However, the pixel array method is a discrete approximation of continuous math, and in particular may produce false positives.
Because semiringKanren computes algebraic data types that are already discrete, it has no such accuracy issues.
Furthermore, it can express more complicated relational joins than directly aligning variables.

\section{SAT Compilation}

The semiringKanren language supports evaluation via SAT solving, which can greatly improve efficiency\cite{volkov2025}.
FormuLog\cite{bembenek2018} is a datalog variant that uses SMT solvers to evaluate logical formulas, as an extension to the standard bottom-up datalog evaluation strategy.
Other relational programming engines that support SAT evaluation include Kodkod\cite{torlak2007}, Paradox, and MACE2.

\section{Polymorphic Logic Languages}

We call a typed logic programming language \emph{polymorphic} when it can define a single relation that can operate on several types.
Polymorphic languages usually fall into two categories: languages with \emph{parametric} polymorphism, or languages with \emph{ad-hoc} polymorphism.
Languages with parametric polymorphism use \emph{type variables} to represent unknown types.
In general, operations other than move and copy are not permitted on values of unknown type.
Languages with ad-hoc polymorphism usually have some means of detecting the type of values, and using this information to potentially perform different operations depending on the type.

Relational programming languages supporting parametric polymorphism include Flix\cite{madsen2020} (which embeds datalog), and functional IncA\cite{pacak2022} (which extends datalog with functions).
Both of these languages use \emph{monomorphization} to implement polymorphism,
which means generating a version of each function or relation for each type it gets called with.
We also support parametric polymorphism in semiringKanren, but do so \emph{without} monomorphization---we believe this to be first for bottom-up relational programming languages.
Note that Flix also supports computing over arbitrary semirings.

Datalog variants frequently exhibit ad-hoc polymorphism.
We have already discussed how datalog programs that implement types as unary relations often exhibit a form of subtyping.
Implementations such as PolyDatalog\cite{atzeni2010} extend this to more sophisticated object-relational type systems.

The Mercury\cite{somogyi1996} relational programming language supports parametric polymorphism by passing along specialized unification and comparison relations for each type variable
when a polymorphic relation is called.
This is different from monomorphization, as only these unification and comparison relations are specialized.
While the canonical implementation of Mercury is based on top-down search,
Mercury has also been implemented on the bottom-up Aditi\cite{vaghanl1994} deductive database system.
How exactly the implementation of polymorphism was translated is unclear,
but this may be a prior instance of non-monomorphizing polymorphism for a bottom-up relational programming language.

Note that many top-down relational programming languages, including miniKanren, are untyped and thus support a form of polymorphism by default.

\section{Other Related Work}

Tabling\cite{tamaki1986} is an approach used by top-down relational programming languages in which goal results are memoized.
This allows top-down languages to avoid getting stuck searching ineffectual paths, and overall function more similarly to bottom-up languages.
Top-down languages are often untyped/polymorphic by default, so this approach can achieve the benefits of both polymorphic and bottom-up languages.

The paper ``Testing Polymorphic Properties''\cite{bernardy2010} outlines a method to test polymorphic functions in languages with parametric polymorphism
by only testing a single monomorphic instance of the function.
While our work is different, we also show that only a single monomorphic instance is needed when using a polymorphic relation.

\chapter{Conclusions and Future Work}

We have presented semiringKanren, a polymorphic bottom-up weighted relational programming language.
This work builds on the state-of-the-art of relational programming in two ways:
semiringKanren is the first bottom-up variant of miniKanren,
and semiringKanren is the first bottom-up relational programming language to support parametric polymorphism without monomorphization.
In particular, we present a denotational semantics for semiringKanren based on operations on multidimensional arrays with elements drawn from a commutative semiring.
We show how these semantics can be extended to support polymorphism.
Finally, we introduce the notions of \emph{equality patterns} and \emph{large-enough instances} of polymorphic relations,
and show how these can be used to compile polymorphic programs into non-polymorphic ones, with minimal monomorphization.

We feel that the miniKanren approach to relational programming lends naturally to these developments.
In particular, microKanren-style \(\texttt{conj}\)- and \(\texttt{disj}\)-based syntax leads to a straightforward denotational semantics.
Furthermore, the flexibility to freely nest goals makes it straightforward to express the nontrivial patterns used when compiling polymorphic programs into non-polymorphic ones.

We hope our new method of implementing polymorphism may find applications in relational languages beyond semiringKanren.
We believe it may offer efficiency gains for polymorphic programs by reducing the need to recalculate the same relation at different types.
We also believe it can help enable ``separate evaluation'' of programs, so relations can be pre-computed and reused without reevaluation.

The OCaml implementation of semiringKanren is available on GitHub\footnote{semiringKanren repository: \url{https://github.com/sporkl/semiringkanren}}.

\section{Future Work}

\subsection{Large-Enough Relation Instances}

In this work, we present a method of counting type variable occurrences to determine when a relation is large-enough, but acknowledge that it is not sufficient in all cases.
It remains to find a fully-general method for calculating the necessary type variable sizes for a relation to be large-enough.
As we have briefly discussed, such a method may involve relation calls providing information about their own argument size requirements during the calculation process.
To handle this calculation when there are recursive calls, a fixpoint process may be necessary.

\subsection{Higher-Order Relations}

Relational programming languages that support higher-order relations (relations that take other relations as arguments) are relatively rare.
Existing work in the space includes \(\lambda\)Prolog\cite{felty1988} and \(\lambda\)Kanren\cite{weixi2020}.

With the \(\texttt{option-map}\) example, we have shown that higher-order relations can be encoded as products if they are used affinely (they can be called either zero times or once).
The power of this technique is not immediately clear.
In particular, once a relation is encoded as a product, it may be possible to effectively duplicate it (likely using a new language construct) to get more calls.

\subsection{Recursive Types}

Thus far, semiringKanren does not support recursive types.
This is very limiting.
For example, semiringKanren cannot represent arbitrary lambda calculus terms and thus perform the traditional quine generation of miniKanren languages.
In practice, recursively-typed values may be arbitrarily large, which seems incompatible with the finite array semantics used by semiringKanren.

We are aware of three potential ways to bring recursive data types to semiringKanren.
First, if semiringKanren gains full support for higher-order relations,
it may become possible to express recursive types using Church-style encodings (given that semiringKanren already supports polymorphism).

Secondly, we may be able to add support for \emph{gas-recursive types}, which are recursive types equipped with an additional size parameter.
This may make it possible to calculate maximum sizes for types, so we can find an upper bound on the sizes of the relation arrays.
This is analogous to the ``vector'' type common in dependently-typed programming languages, where the type holds a number recording the length.

Finally, it may be possible to use a system based on defunctionalization and refunctionalization to eliminate recursive data types, as in ``Exact Recursive Probabilistic Programming''\cite{chiang2023}.
From a preliminary investigation, it appears possible to translate the de/refunctionalized programs presented in that paper to semiringKanren.
It remains to be seen if the approach fully generalizes to ``derelationalization'' and ``rerelationalization'' for arbitrary semiringKanren programs.

\subsection{Semiring Fixpoint Convergence}

We have not yet investigated termination in semiringKanren.
Relatedly, we do not know if semiringKanren is Turing-complete.
These properties may depend on the specific semiring in use.

Properties of semirings have been widely explored for datalog.
For example, the base datalog language is known to always terminate and can express any polynomial-time algorithm\cite{madsen2020}.
Datalog variants computing over provenance semirings are also known to terminate\cite{green2007}.
Furthermore, certain algebraic semiring properties map to convergence properties in datalog\cite{khamis2024}.
We have not investigated whether the same properties apply to semiringKanren.

In semiringKanren, we can sometimes negate goals.
For semirings that are also rings (have additive inverses), we can express ``not'' with \(\texttt{(factor -1)}\).
For the usual semiring over real numbers, we can express disequality as follows:
\begin{equation}
\begin{aligned}
	& \texttt{(defrel (=/= (\(x : \alpha\)) (\(y : \alpha\)))} \\
	& \quad \texttt{(disj} \\
	& \quad\quad \texttt{(factor 1)} \\
	& \quad\quad \texttt{(conj} \\
	& \quad\quad\quad \texttt{(factor -1)} \\
	& \quad\quad\quad \texttt{(== \(x\) \(y\)))))} \\
\end{aligned}
\end{equation}
This roughly reads as ``succeed always, but negate the success when \(x = y\).''
Manually computing, this gives us an off-diagonal matrix as expected for \(\texttt{=/=}\).
While this particular example terminates, programs with negation in general may not.
The ``meaning'' of negation is unclear here.
Relational programming languages supporting negation are uncommon, and use a variety of approaches.
Negation has been implemented in Datalog using a three-valued semiring, and with ``stratified'' evaluation approaches (where different rules are evaluated at different times).
Several miniKanren variants support \(\texttt{noto}\) goals, including stableKanren\cite{guo2023} which is based on stable model semantics,
and minnaKanren\cite{donahue2023} which uses a system of constraints.

\subsection{SMT Solving}

We have shown that semiringKanren programs can be translated into SAT problems for efficient evaluation over the boolean semiring\cite{volkov2025}.
We believe this approach can be extended to use SMT solvers, for evaluation over other semirings.

\subsection{Applications}

It remains to see where semiringKanren may be effectively applied.
We believe semiringKanren may be a strong fit for probabilistic modelling.
It may also be worthwhile to explore existing applications of datalog, and see which may benefit from the greater expressivity of miniKanren-style syntax.

\subsection{Alternative Evaluation Strategies}

We have shown evaluation strategies for semiringKanren based on operations on semiring arrays.
In prior work, we have shown an evaluation strategy based on translation to SAT\cite{volkov2025}.
We believe there are other evaluation strategies worth exploring.

For semiringKanren's array-based evaluation, the arrays it uses are highly structured.
The diagonal and off-diagonal arrays used for equality and disequality are a clear example.
Thus, using dense arrays is likely an unnecessary use of space.
Methods exist that exploit these sorts of structures in arrays/tensors to enable parallelization and achieve significant speedups\cite{ghorbani2025}.
Implementing a backend based on such a system may make larger problems and programs tractable for semiringKanren.

We already have an practical method for compiling semiringKanren programs to SAT problems.
We can compile boolean formulas into quantum circuits, thus we can use Grover's quantum database search algorithm\cite{grover1996} to perform this evaluation.

\printbibliography[heading=bibintoc]

\end{document}